\newtheorem{theorem}{Theorem}[section]
\newtheorem{lemma}[theorem]{Lemma}
\newtheorem{proposition}[theorem]{Proposition}
\theoremstyle{remark}
\newtheorem{remark}[theorem]{Remark}
\theoremstyle{definition}
\newtheorem{definition}[theorem]{Definition}
\theoremstyle{example}
\theoremstyle{notation}
\newcommand{\bra}[1]{\langle#1|}
\newcommand{\ket}[1]{|#1\rangle}
\begin{document}

\title{Random projectors with continuous resolutions of the identity in a finite-dimensional Hilbert space}            
\author{A. Vourdas}
\affiliation{Department of Computer Science,\\
University of Bradford, \\
Bradford BD7 1DP, United Kingdom\\a.vourdas@bradford.ac.uk}

\begin{abstract}

Random sets are used to get a continuous partition of the cardinality of the union of many overlapping sets.
The formalism uses M\"obius transforms and adapts Shapley's methodology in cooperative game theory, into the context of set theory.
These ideas are subsequently generalized into the context of  finite-dimensional Hilbert spaces.
Using random projectors into the subspaces spanned by states from a total set, we construct an infinite number of continuous resolutions of the identity, that involve Hermitian positive semi-definite operators. 
The simplest one is the diagonal continuous resolution of the identity, and it is used to expand an arbitrary vector 
in terms of a continuum of components. It is also used to define the $F(x_1,x_2)$ function on the `probabilistic quadrant' $[0,\infty) \times [0,\infty)$,
which is analogous to the Wigner function for the harmonic oscillator, on the phase-space plane.
Systems with  finite-dimensional Hilbert space (which are naturally described with discrete variables) are described here with continuous probabilistic variables.
\end{abstract}
\maketitle

\section{Introduction}
\subsection{Background}

Coherent states and more generally positive operator valued measure (POVM) \cite{C1,C2,V2} play an important role in quantum mechanics.
Their most important property is the resolution of the identity which can be written as
\begin{eqnarray}\label{I1}
\sum _i \theta _i={\bf 1};\;\;\;i\in \Omega
\end{eqnarray}
where $\theta _i$ are operators, with indices in a set $\Omega$. 
The summation becomes integration in the case of a continuum of operators.
Using the resolution of the identity we can expand an arbitrary state in the Hilbert space $\ket{s}$ as
\begin{eqnarray}\label{I2}
\ket{s}=\sum _i \ket{s_i};\;\;\;\ket{s_i}=\theta _i\ket{s}.
\end{eqnarray}
In addition to that the quantity 
\begin{eqnarray}
f(i)={\rm Tr}(\rho \theta _i);\;\;\;\sum _{i \in \Omega}f(i)=1,
\end{eqnarray}
 where $\rho$ is a density matrix, is physically important. If $\theta _i$ are Hermitian operators, $f(i)$
is the average outcome of a measurement with $\theta _i$ on an ensemble described with the density matrix $\rho$.

In the case that $\theta _i$ are projectors, let $h_i$ be the subspace into which the $\theta _i$ projects.
Then we get a compartmentalisation  of the Hilbert space $H$ into subspaces $h_i$: 
\begin{eqnarray}
H=\bigvee _ih_i;\;\;\;h_i\wedge h_j={\bf 0}.
\end{eqnarray}
The conjunction $\wedge $ and disjunction $\vee$ of Hilbert spaces are defined briefly below (in Eqs(\ref{V1}), (\ref{V2})).
In general the subspaces $h_i$ are non-orthogonal to each other. Large values of $f(i)={\rm Tr}(\rho \theta _i)$
show the compartments  in the Hilbert space where most of the quantum state is located.
Known examples are:
\begin{itemize}
\item
A resolution of the identity in terms of orthogonal projectors.
For example, in the infinite-dimensional harmonic oscillator Hilbert space, projectors related to position states ($\Omega={\mathbb R}$) lead to a continuous resolution of the identity.
Also projectors related to number states ($\Omega={\mathbb Z}_0^+$) lead to a discrete resolution of the identity.
They lead to an expansion of a vector in an orthogonal basis, and 
the various components are independent from each other.
In these cases the $f(i)$ are probability distributions.
There is no redundancy in these cases, and this may be a disadvantage in noisy situations.

\item
A continuous resolution of the identity in terms of non-orthogonal projectors related to coherent states  ($\Omega={\mathbb C}$) in the infinite-dimensional harmonic oscillator Hilbert space.
This leads to an expansion of a vector in terms of components which are not independent from each other.
There is redundancy in this formalism, which can be an advantage in noisy situations.
The $f(i)={\rm Tr}(\rho \theta _i)$ is the $Q$ (or Husimi) function.
Large values of the $Q$-function, show the subspaces (related to coherent states) where most of the quantum state is located.

\end{itemize}

There are other resolutions of the identity where $\theta _i$ are {\bf not} projectors. 
In this case the Hilbert space is not compartmentalised into subspaces (the rank of the matrices $\theta _i$ might be equal to the dimension of the Hilbert space). 
However the expansion of Eq.(\ref{I1}) still holds, and the quantity $f(i)$ in Eq.(\ref{I2}) is physically important.
So it is not important the $\theta _i$ to be projectors.

An example is the
continuous resolution of the identity in terms of displaced parity operators in the infinite-dimensional harmonic oscillator Hilbert space.
In this case $\Omega={\mathbb C}$ (a very brief summary of this formalism is in section \ref{WWW} below).
The  $f(i)={\rm Tr}(\rho \theta _i)$ is then  the Wigner function.
Large values of the Wigner function, show which displaced parity operators overlap most with the quantum state.

The above examples, belong to the general area of `phase space methods'  which adapt the classical concept of phase space into a quantum context.
It is important to find new resolutions of the identity and generalise the above formalisms outside the context of phase space methods, but we emphasise that this can be a difficult task.
For this reason in the subject of frames and wavelets\cite{C3}, we have no exact resolution of the identity, but we have lower and upper bounds to it.

In \cite{Vou1,Vou2} we proposed another approach to this general area.
We started from an arbitrary total set of $n\ge d$ vectors in $d$-dimensional Hilbert space $H(d)$, which we call a pre-basis, and for which 
in general we have no resolution of the identity. We renormalised (`dressed') them into a `basis' of $n$ mixed states (density matrices), that resolve the identity
(in this case $\Omega=\{1,...,n\}$).
The renormalization formalism is inspired by the Shapley methodology in cooperative game theory\cite{G1,G2,G3,G4,G5} and uses the M\"obius transform\cite{R}.
But Shapley's approach is for scalar quantities, while our approach is for matrices.
The formalism leads to a {\bf discrete} resolution of the identity that involves a finite number of density matrices, and it is outside the general area of phase space methods.
We have shown that due to redundancy, the formalism is sensitive to physical changes and insensitive to noise.

\subsection{Present work}

In this paper we extend these ideas in a novel direction, by introducing random projectors into the $2^n$ subspaces spanned by vectors in a total set of $n$ vectors in $H(d)$. 
We note that there are many ways of definining random projectors and we use the definition \ref{def200}.
The average of these random projectors is a function of $n$ probabilities, and 
is used to get an infinite number of {\bf continuous} resolutions of the identity,  each of which involves a  one-dimensional  continuum
of Hermitian positive semi-definite operators. One of them, the diagonal continuous resolution of the identity, is simpler than the others and in this sense it is of special importance.
It is used to expand an arbitrary vector 
in terms of a continuum of components, and to define the $F(x_1,x_2)$ function on the `probabilistic quadrant' $[0,\infty) \times [0,\infty)$
which is analogous to the Wigner function for the harmonic oscillator.

Quantum systems with finite-dimensional Hilbert space use variables that take a finite number of values, and consequently they involve naturally many of the techniques in the general area of Discrete Mathematics: finite sums, finite fields, discrete Fourier transforms, etc (e.g. \cite{V1}). 
For example, their phase space consists of a finite number of points and it is described by finite geometries, the Wigner function takes a finite number of values, etc. 
The present approach describes these systems with continuous variables, that are related to the probabilities associated with random bases.
This enables the use of `Continuous Mathematics' in this context.
For example, we get a `continuous formalism' analogous to the phase space methods for the harmonic oscillator (which involves naturally continuous variables).

The Shapley methodology of cooperative game theory, is adapted here into the language of set theory. This makes it suitable for Physics applications, in contrast to cooperative game theory which is usually presented in the context of Mathematical Economics.
Furthermore exposition of these ideas in the simpler context of set theory, makes easier their generalization into Hilbert spaces which is our main objective.

For these reasons the paper is divided into two parts: the `set theory part' (sections 2, 4, 5) and the `Hilbert space part' (sections 3, 6, 7, 8).
The `set theory part' is interesting in its own right, and might have applications in other areas, but here we are interested in its generalization into 
finite-dimensional Hilbert spaces, which is presented in the `Hilbert space part'.
The overall presentation and notation in the paper, aims to make clear the analogy between the two parts.

The main results of the paper are: 
\begin{itemize}
\item
A continuous partition of the total cardinality of a finite number of overlapping sets (proposition \ref{PRO10}).
This uses probabilistic variables related to random sets (defined as in definition \ref{def100}).
An example is given in section \ref{ex}.
\item
Several continuous resolutions of the identity in a finite-dimensional Hilbert space  (proposition \ref{PRO100}), that involve Hermitian positive semi-definite operators.
This uses probabilistic variables related to random projectors (defined as in definition \ref{def200}).
Among them, the simplest one is the diagonal continuous resolution of the identity (proposition \ref{PRO200}). It is used to define the $F(x_1,x_2)$ function on the `probabilistic quadrant' $[0,\infty) \times [0,\infty)$,
which is analogous to the Wigner function for the harmonic oscillator, on the phase-space plane. An example is given in section \ref{ex1}.
\end{itemize}

We note that the general area of {\bf discrete random structures and  their relation to the continuum} is a `hot topic' in probability theory (e.g., random trees and graphs \cite{PR1,PR2}), in computer science (e.g.,  the journal `random structures and algorithms' ), in statistical physics and percolation \cite{PR3}, etc .
The present work brings related ideas into a quantum context.

\subsection{Contents}

In section 2 we present briefly the Shapley methodology of cooperative game theory, using the language of set theory. The main tool in this section is M\"obius transforms.
In section 3 we review briefly for later use, some aspects of the formalism in refs\cite{Vou1,Vou2}, which is based on Shapley's methodology, and which led to discrete resolutions of the identity in finite-dimensional Hilbert spaces.

 In section 4 we define random sets and study the properties of their probabilities. 
The average cardinality of these random sets is a function of $n$ probabilities, and 
is used in section 5 to get a continuous partition of the total cardinality of $n$ overlapping sets.

In section 6 we introduce random projectors  into the $2^n$ subspaces spanned by states in a total set of $n$ states.
The average of these random projectors is an operator (not a projector) that depends on a $n$ probabilities and is used to get an infinite number of continuous resolutions of the identity.
Among them the diagonal  continuous resolution of the identity plays a central role because of its simplicity. It is used in section 7 to expand an arbitrary vector, in terms of a continuum of components.
Also a function $F(x_1,x_2)$ is defined on the `probabilistic  quadrant' $[0, \infty)\times [0,\infty)$, which is analogous  to the Wigner function for the harmonic oscillator.
An example is presented in section 8. 

We conclude in section 9 with a discussion of our results.

\section{Shapley methodology in the context of set theory: dividing the overlaps between sets}\label{FFF}
In this section we adapt in the context of set theory, the Shapley methodology in cooperative game theory \cite{G1,G2,G3,G4,G5}.  
The players are replaced by overlapping sets, and the `worth' of each player (characteristic function) by the cardinality of each set.
A coalition is a union of some of these sets, and its `worth' is its cardinality.
The Shapley methodology divides the overlaps of the sets, equally  to all its `owners' (it is a type of `divorce settlement' for sets which have common `assets').
This leads to a new concept of cardinality, that we call Shapley cardinality (it shows the `assets' that belong exclusively to a set, after the `divorce settlement').

Our presentation of the Shapley methodology in the context of set theory is more appropriate for Physics,  
than cooperative game theory which is usually presented in the context of Mathematical Economics.

\subsection{M\"obius transform}
The M\"obius transform, is used extensively in Combinatorics, 
after the work by Rota\cite{R}. It is a generalization of
the inclusion-exclusion principle that gives the cardinality of the union of overlaping sets.
Rota generalized this to partially ordered structures.

Let $S_1,...S_n$ be a collection of finite sets, which in general overlap with each other. 
We call $\Omega$ the set of their indices $\Omega=\{1,...,n\}$.
If $A$ is a subset of $\Omega$, we denote as ${\mathfrak S}(A)$ 
the union of the sets with indices in $A$.
${\mathfrak S}(\Omega)$ is
the union of all the sets $S_1,...S_n$.
$\mu(A)$ the cardinality of ${\mathfrak S}(A)$:
\begin{eqnarray}\label{1}
{\mathfrak S}(A)=\bigcup _{i\in A} S_i;\;\;\;\mu(A)=|{\mathfrak S}(A)|;\;\;\;A\subseteq \Omega.
\end{eqnarray}
$|B|$ denotes the cardinality of a set $B$.
If $A=\emptyset$ then  ${\mathfrak S}(A)=\emptyset$ and $\mu (\emptyset)=0$.
Also ${\mathfrak S}(\{i\})=S_i$.
If $A\subseteq B$ then $\mu(A)\le \mu (B)$.
There are $2^n$ subsets of $\Omega$, and therefore there are $2^n$ values of $\mu (A)$.
In general
\begin{eqnarray}\label{ineq}
\mu (A)\ne \sum _{i\in A}\mu (\{i\}).
\end{eqnarray}

The M\"obius transform of $\mu (A)$ is defined as
\begin{eqnarray}\label{M}
{\mathfrak d} (A)=\sum  _{B\subseteq A}(-1)^{|A|-|B|} \mu(B).
\end{eqnarray}
For example,
\begin{eqnarray}\label{X1}
{\mathfrak d}(\{i\})&=&\mu (\{i\})\nonumber\\
{\mathfrak d} (\{i,j\})&=&\mu (\{i,j\})-\mu(\{i\})-\mu(\{j\})\nonumber\\
{\mathfrak d} (\{i,j,k\})&=&\mu (\{i,j,k\})-\mu(\{i,j\})-\mu(\{i,k\})-\mu(\{j,k\})+\mu(\{i\})+
\mu(\{j\})+\mu(\{k\}).
\end{eqnarray}

The inverse M\"obius transform is 
\begin{eqnarray}\label{b7}
\mu (A)=\sum _{B\subseteq A}{\mathfrak d} (B)=\sum _{i\in A}{\mathfrak d} (\{i\})+\sum _{i,j\in A}{\mathfrak d} (\{i,j\})+\sum _{i,j,k\in A}{\mathfrak d} (\{i,j,k\})+....
\end{eqnarray}
For example,
\begin{eqnarray}\label{X2}
\mu (\{i,j\})&=&{\mathfrak d} (\{i,j\})+{\mathfrak d}(\{i\})+{\mathfrak d}(\{j\})\nonumber\\
\mu (\{i,j,k\})&=&{\mathfrak d} (\{i,j,k\})+{\mathfrak d}(\{i,j\})+{\mathfrak d}(\{i,k\})+{\mathfrak d}(\{j,k\})+{\mathfrak d}(\{i\})+{\mathfrak d}(\{j\})+{\mathfrak d}(\{k\}).
\end{eqnarray}
Eqs(\ref{X1}),(\ref{X2}) show that the M\"obius transform describes the overlaps between sets, and the ${\mathfrak d} (\{i,j\})$, ${\mathfrak d} (\{i,j,k\})$,..., are `corrections' used to avoid double-counting.

For non-overlapping sets ($S_i\cap S_j=\emptyset$ for all $i,j$)
Eq.(\ref{ineq}) is equality for all $A$, and then ${\mathfrak d} (A)=0$ for all $A$ with cardinality greater or equal to $2$.
Therefore the importance of the ${\mathfrak d} (A)$ lies in the fact that Eq.(\ref{ineq}) is in general inequality.
\begin{remark}
Some of the work on game theory (e.g. \cite{G1}) is for superadditive game theory, where the whole is greater than the sum of its parts.
It is interesting that our use of game theory  in the context of set theory, is an example of subadditive game theory because
\begin{eqnarray}\label{3}
\mu (A)\le \sum _{i\in A}\mu (\{i\}).
\end{eqnarray}
In the language of game theory, this means that coalitions (i.e., set unions) play always a negative role (decrease the cardinality).
But our results are general and do not use Eq.(\ref{3}).

\end{remark}

\subsection{Partition of the total cardinality into Shapley cardinalities of the constituent sets}

We can rewrite Eq.(\ref{b7}) as follows:
\begin{eqnarray}\label{b89}
\mu (A)=\sum _{i\in A} M_A(i);\;\;\;M_A(i)=\mu (\{i\})+\frac{1}{2}\sum _{j\in A}{\mathfrak d} (\{i,j\})+\frac{1}{3}\sum _{j,k\in A}{\mathfrak d} (\{i,j,k\})+....
\end{eqnarray}
${\mathfrak d} (\{i,j\})$ is `common asset' that belongs to both sets $S_i, S_j$.
For this reason we add half of ${\mathfrak d} (\{i,j\})$ to the cardinality of the set $S_i$, and the other half  to the cardinality of the set $S_j$.
Similarly we add one third  of ${\mathfrak d} (\{i,j, k\})$ to the cardinality of the set $S_i$, another third to the cardinality of the set $S_j$, and another third to the cardinality of the set $S_k$, etc.

Eq.(\ref{b89}) is inspired by Shapley's methodology in cooperative game theory\cite{G2,G3,G4}, and is presented here in the context of set theory.
We call the $M_A(i)$ Shapley cardinality of the set $S_i$. 
It takes into account that the overlaps between sets are `joint property' which needs to be divided equally among all its owners. 
$M_A(i)$ depends not only on the set $S_i$ but also on its overlaps with the other sets with indices in $A$.

The cardinality of the union of sets, is the sum of the Shapley cardinalities of the various sets. 
For example, for two sets Eq.(\ref{X2}) can be rewritten as
\begin{eqnarray}\label{P1}
&&\mu (A)=M_A(i)+M_A(j);\;\;\;A=\{i,j\}\nonumber\\
&&M_A(i)=\mu(\{i\})+\frac{1}{2}{\mathfrak d} (A)\nonumber\\
&&M_A(j)=\mu(\{j\})+\frac{1}{2}{\mathfrak d} (A).
\end{eqnarray}
For three sets it can be written as
\begin{eqnarray}\label{P2}
&&\mu (A)=M_A(i)+M_A(j)+M_A(k);\;\;\;A=\{i,j,k\}\nonumber\\
&&M_A(i)=\mu(\{i\})+\frac{1}{2}[{\mathfrak d}(\{i,j\})+{\mathfrak d}(\{i,k\})]+\frac{1}{3}{\mathfrak d} (A)\nonumber\\
&&M_A(j)=\mu(\{j\})+\frac{1}{2}[{\mathfrak d}(\{i,j\})+{\mathfrak d}(\{j,k\})]+\frac{1}{3}{\mathfrak d} (A)\nonumber\\
&&M_A(k)=\mu(\{k\})+\frac{1}{2}[{\mathfrak d}(\{i,k\})+{\mathfrak d}(\{j,k\})]+\frac{1}{3}{\mathfrak d} (A).
\end{eqnarray}
We note that $M_A(i)$ depends on $A$, e.g., the $M_A(i)$ in Eq.(\ref{P1}) where $A=\{i,j\}$ is different from the the $M_A(i)$ in Eq.(\ref{P2}) where $A=\{i,j,k\}$.

In the special case of non-overlapping sets ($S_i\cap S_j=\emptyset$ for all $i,j$), the ${\mathfrak d} (A)=0$ for all $A$ with cardinality greater or equal to $2$. Then the $M_A(i)=\mu (\{i\})$ does not depend on $A$.

\subsection{An alternative approach to the Shapley cardinality}\label{sec20}

We order the sets as $S_{i_1},...,S_{i_n}$ and build gradually their union as
\begin{eqnarray}
S_{i_1} , {\mathfrak S}(\{i_1,i_2\}),...,{\mathfrak S}(\Omega).
\end{eqnarray} 
All $n!$ orders are equally likely. 
 Then $M_\Omega (i)$ is the average contribution to the cardinality, of the set $S_i$ as it joins the union of sets:
 \begin{eqnarray}\label{b90}
&&M_{\Omega} (i)=\sum _{A\subseteq \Omega} B(n-a+1, a)[\mu(A)-\mu(A-\setminus\{i\})]=\sum _{a=1}^nB(n-a+1, a)\lambda _{a}(i)\ge 0\nonumber\\
&&\lambda _{a}(i)=\sum _{|A|=a, A\ni i} [\mu(A)-\mu (A\setminus \{i\})]\ge 0;\;\;\;a=|A|\nonumber\\
&&\sum _{i=1}^nM_{\Omega} (i)=\mu(\Omega)
\end{eqnarray}
The equivalence between Eq.(\ref{b89}) (with $A$ replaced by $\Omega$) and Eq.(\ref{b90}) has been proved by Shapley\cite{G2,G3,G4,G5}.

The summation in the second equation is over all subsets of indices $A$, that contain $i$ and have cardinality $a$.
The $\mu(A)-\mu(A\setminus \{i\})$ is the `cardinality increment' as we go from the union of $(a-1)$ sets ${\mathfrak S}(A\setminus \{i\})$, to the union of sets ${\mathfrak S} (A)$.
The $\lambda _{a}(i)$  are an `average cardinality increment' when we add the set $S_i$ to the union ${\mathfrak S}(A\setminus \{i\})$.

$B(x,y)$ is the Beta function.
The Beta function in Eq.(\ref{b90}) is the probability that the set $S_i$ will join the above union after the sets with indices in $A\setminus \{i\}$ and before the 
sets with indices in $\Omega \setminus A$.
In order to see this, we first form the union 
\begin{eqnarray}
{\mathfrak S}(A\setminus \{i\})=\bigcup _{j\in A\setminus \{i\}}S_j.
\end{eqnarray}
There are $(a-1)!$ orders for doing this.
Then we form the union ${\mathfrak S}(A\setminus \{i\})\cup S_i={\mathfrak S}(A)$, and then the union
\begin{eqnarray}
{\mathfrak S}(A)\bigcup _{k\in \Omega\setminus A}S_k={\mathfrak S}(\Omega).
\end{eqnarray}
There are $(n-a)!$ orders for doing this last step. Since all $n!$ orders are equally likely, this leads to the probability
 \begin{eqnarray}\label{beta}
 \frac{(a-1)!(n-a)!}{n!}=B(n-a+1, a).
\end{eqnarray}
This probability is multiplied by $\lambda _{a}(i)$ which contains the contribution of the set $S_i$ to the cardinality $\mu(A-\setminus\{i\})$, for all $A$ that contain the element $i$ and have cardinality $|A|=a$.
\begin{remark}
The Shapley approach is one of many ways of dividing the overlaps. Other approaches like the Banzhaf method, have also been studied.
Later proposition \ref{PRO10} gives an infinite number of other ways.
\end{remark}

\subsection{A  generalized M\"obius-like relation between $\mu (A)$ and ${\mathfrak d}(A)$}\label{SEC1}

In this section we prove some relations which are needed later.
\begin{definition}
If $A,B$ are subsets of $\Omega$
\begin{eqnarray}
&&\delta (A,B)=1\;{\rm when}\;A=B\nonumber\\
&&\delta (A,B)=0\;{\rm when}\;A\ne B
\end{eqnarray}
\end{definition}
\begin{definition}
Let $q(A)$ where $A\subseteq \Omega$, be a set of $2^n$ real numbers.
Then
 \begin{eqnarray}\label{45A}
Q(A)=\sum _{B\supseteq A}q(B);\;\;\;{\mathfrak Q}_i=\sum _{A\ni i}q(A).
\end{eqnarray}
The following lemma considers two important special cases.
\begin{lemma}\label{L1}
\begin{itemize}
\item[(1)]
In the case $q(B)=\delta (B,C)$ we get
 \begin{eqnarray}
&&Q(A)=1\;{\rm when}\;A\subseteq C\nonumber\\
&&Q(A)=0\;{\rm otherwise}.
\end{eqnarray}
\item[(2)]
In the case  
\begin{eqnarray}
&&q(B)=(-1)^{|C|-|B|} \;{\rm when}\;B\subseteq C\nonumber\\
&&q(B)=0\;{\rm otherwise}.
\end{eqnarray}
we get $Q(A)=\delta(A,C)$.
\end{itemize}
\end{lemma}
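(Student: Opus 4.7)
My plan is to verify both parts of the lemma by direct evaluation of the defining sum $Q(A)=\sum_{B\supseteq A}q(B)$; no tool beyond the standard alternating binomial identity $\sum_{k=0}^{m}\binom{m}{k}(-1)^{k}=(1-1)^{m}$ is required.

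For part (1), with $q(B)=\delta(B,C)$ the sum $Q(A)=\sum_{B\supseteq A}\delta(B,C)$ contains exactly one potentially nonzero term, namely the one with $B=C$, and this term is actually present in the range of summation if and only if $C\supseteq A$. Hence $Q(A)=1$ when $A\subseteq C$ and $Q(A)=0$ otherwise, which matches the claim. This is essentially immediate from the definitions.

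For part (2), I first observe that since $q(B)=0$ whenever $B\not\subseteq C$, only those $B$ with $A\subseteq B\subseteq C$ contribute; in particular $Q(A)=0$ trivially whenever $A\not\subseteq C$. In the remaining case $A\subseteq C$ I would parameterize each admissible $B$ as $B=A\cup B'$ with $B'\subseteq C\setminus A$; since $A\cap B'=\emptyset$ this gives $|B|=|A|+|B'|$, and hence
\[
Q(A)\;=\;\sum_{B'\subseteq C\setminus A}(-1)^{|C|-|A|-|B'|}\;=\;(-1)^{|C|-|A|}\sum_{k=0}^{|C\setminus A|}\binom{|C\setminus A|}{k}(-1)^{k}.
\]
The inner sum equals $(1-1)^{|C\setminus A|}$, which vanishes unless $C\setminus A=\emptyset$, i.e., unless $A=C$, in which case it equals $1$. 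So $Q(A)=\delta(A,C)$, as required.

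There is no real obstacle here: the lemma is a textbook instance of M\"obius inversion on the Boolean lattice $2^{\Omega}$, with the two parts recording the indicator-type kernel and its M\"obius inverse on that lattice. I expect it will be invoked later as a mechanical device to convert between sums of the form $\sum_{B\supseteq A}$ and the coefficient structure of the M\"obius transform in Eq.~(\ref{M}), when manipulating the probabilistic weights attached to random sets and random projectors.
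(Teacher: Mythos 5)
Your proof is correct and follows essentially the same route as the paper: part (1) by inspection, and part (2) by reducing to the sum $\sum_{A\subseteq B\subseteq C}(-1)^{|C|-|B|}$. The only difference is that you evaluate this sum in full generality via the alternating binomial identity $(1-1)^{|C\setminus A|}$, whereas the paper checks the cases $C=A$, $C=A\cup\{i\}$, $C=A\cup\{i,j\}$ and declares the rest ``easily seen''; your version simply supplies the detail the paper omits.
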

\begin{proof}
\begin{itemize}
\item[(1)]
The proof of this is straightforward.
\item[(2)]
This reduces to 
\begin{eqnarray}
Q(A)=\sum_{A\subseteq B\subseteq C} (-1)^{|C|-|B|}.
\end{eqnarray}
If $C=A$ then $Q(A)=1$.
If $C$ is any other set, e.g., $C=A\cup \{i\}$ or $C=A\cup\{i,j\}$, etc, it is easily seen that $Q(A)=0$. 
\end{itemize}
\end{proof}
\end{definition}
\begin{lemma}\label{PRO1}
 \begin{eqnarray}\label{X}
\sum _A\mu(A)q(A)=\sum _A{\mathfrak d}(A)Q(A).
\end{eqnarray}
In the special case of non-overlapping sets ($S_i\cap S_j=\emptyset$ for all $i,j$)
 \begin{eqnarray}\label{X}
\sum _A\mu(A)q(A)=\sum _A{\mathfrak d}(A)Q(A)=\sum _i \mu (\{i\}) {\mathfrak Q}_i.
\end{eqnarray}
\end{lemma}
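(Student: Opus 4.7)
The plan is to substitute the inverse M\"obius transform (Eq.(\ref{b7})), namely $\mu(A)=\sum_{B\subseteq A}\mathfrak{d}(B)$, into the left-hand side of Eq.(\ref{X}) and then swap the order of summation so that the outer sum is over $B$ rather than $A$. Concretely, I would write
\begin{eqnarray}
\sum_{A\subseteq\Omega}\mu(A)q(A)=\sum_{A\subseteq\Omega}q(A)\sum_{B\subseteq A}\mathfrak{d}(B)=\sum_{B\subseteq\Omega}\mathfrak{d}(B)\sum_{A\supseteq B}q(A),
\end{eqnarray}
where the interchange is justified because both sums are finite (there are $2^n$ subsets of $\Omega$). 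By the definition of $Q$ in Eq.(\ref{45A}), the inner sum on the right is exactly $Q(B)$, which yields the first identity.

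For the special case of non-overlapping sets, I would invoke the observation already recorded after Eq.(\ref{X2}): if $S_i\cap S_j=\emptyset$ for all $i\ne j$, then $\mathfrak{d}(A)=0$ whenever $|A|\ge 2$, and $\mathfrak{d}(\{i\})=\mu(\{i\})$. Therefore only singletons survive in the sum on the right, giving
\begin{eqnarray}
\sum_{B\subseteq\Omega}\mathfrak{d}(B)Q(B)=\sum_{i\in\Omega}\mu(\{i\})Q(\{i\}).
\end{eqnarray}
It remains to notice that $Q(\{i\})=\sum_{B\supseteq\{i\}}q(B)=\sum_{B\ni i}q(B)=\mathfrak{Q}_i$, again by the definitions in Eq.(\ref{45A}), which completes the chain of equalities.

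I do not foresee a genuine obstacle here: the argument is entirely bookkeeping about double sums over the finite Boolean lattice of subsets of $\Omega$. The only point that warrants a touch of care is making sure the re-indexing of the double sum $\sum_{A}\sum_{B\subseteq A}\mapsto\sum_{B}\sum_{A\supseteq B}$ is written unambiguously, so that the reader sees immediately that it produces the $Q(B)$ of Eq.(\ref{45A}). Lemma~\ref{L1} is not needed for this proof, but it is consistent with it: choosing $q(B)=\delta(B,C)$ recovers $\mu(C)=\sum_{B\subseteq C}\mathfrak{d}(B)$, and the second choice recovers $\mathfrak{d}(C)=\sum_{B\subseteq C}(-1)^{|C|-|B|}\mu(B)$, so one may also view Lemma~\ref{PRO1} as the ``master identity'' from which both directions of the M\"obius inversion follow by specialisation.
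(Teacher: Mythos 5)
Your proof of the main identity is exactly the paper's: substitute the inverse M\"obius transform $\mu(A)=\sum_{B\subseteq A}\mathfrak{d}(B)$ into $\sum_A\mu(A)q(A)$ and interchange the two finite sums, recognising the inner sum $\sum_{A\supseteq B}q(A)$ as $Q(B)$ from Eq.(\ref{45A}). For the special case of non-overlapping sets you take a slightly different but equally valid route: you evaluate the right-hand side $\sum_B\mathfrak{d}(B)Q(B)$, using the fact (recorded in the paper after Eq.(\ref{X2})) that $\mathfrak{d}(A)=0$ for $|A|\ge 2$, together with $\mathfrak{d}(\{i\})=\mu(\{i\})$ and $Q(\{i\})=\sum_{B\ni i}q(B)=\mathfrak{Q}_i$. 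The paper instead works from the left-hand side: since Eq.(\ref{ineq}) becomes an equality, it replaces $\mu(A)$ by $\sum_{i\in A}\mu(\{i\})$ and re-sums over $i$ to obtain $\sum_i\mu(\{i\})\mathfrak{Q}_i$ directly. Both arguments rest on facts already stated in the paper and are equally short; yours makes the vanishing of the higher-order M\"obius coefficients do the work, the paper's makes the additivity of the cardinality do the work. Your closing observation that Lemma~\ref{L1} specialises the identity back to the M\"obius transform and its inverse matches the remark the paper makes immediately after the lemma, so nothing is missing.
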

\begin{proof}
From Eq.(\ref{b7}) we get
\begin{eqnarray}
\sum _{A}\mu (A)q(A)=\sum _{A}\sum _{B\subseteq A}{\mathfrak d} (B)q(A)=\sum _{B}{\mathfrak d} (B)\left [\sum _{A\supseteq B}q(A)\right]=
\sum _{B}{\mathfrak d} (B)Q(B).
\end{eqnarray}
Eq.(\ref{45A}) has been used in the last step.

In the special case of non-overlapping sets, Eq.(\ref{ineq}) becomes equality and 
\begin{eqnarray}
\sum _{A}\mu (A)q(A)=\sum _i\sum _{A} \mu (\{i\}) q(A)=\sum _i \mu(\{i\})\left [\sum _{A\ni i}q(A)\right ]=\sum _i \mu(\{i\}){\mathfrak Q}_i.
\end{eqnarray}
\end{proof}
Using lemma \ref{L1} we show that the M\"obius transform in Eqs(\ref{M}), and the inverse M\"obius transform in
Eq.(\ref{b7}) are special cases of the more general relation in Eq.(\ref{X}).

\section{Discrete resolutions of the identity in a finite-dimensional Hilbert space}

\subsection{M\"obius transform for projectors}\label{A}
\begin{definition}
A set of vectors in a Hilbert space is total, if there is no vector which is orthogonal to all vectors in the set.
\end{definition}
\begin{definition}
A `pre-basis' in a $d$-dimensional Hilbert space $H(d)$  is a set of $n\ge d$ states 
\begin{eqnarray}\label{sig}
\Sigma =\{\ket{i}\;|\;i\in \Omega\};\;\;\;\Omega=\{1,...,n\}
\end{eqnarray}
such that:
\begin{itemize}
\item
Any subset of $d$ of these states, are linearly independent. 
\item
$\Sigma$ and also any of its subsets with $r\ge d$ of these states, are total sets.
\item
In general, we have no resolution of the identity in terms of these $n$ states.
\end{itemize}
\end{definition}
We emphasize from the outset that there is redundancy in the sense that the pre-basis has more vectors than the dimension of the space.
Redundancy is essential in noisy situations, and this is indeed the merit of this approach.

Let $h_1, h_2$ be two subspaces of $H(d)$. Their disjunction is
\begin{eqnarray}\label{V1}
h_1\vee h_2={\rm span}(h_1\cup h_2).
\end{eqnarray}
This is the quantum OR operation and includes all superpositions of vectors in the two spaces (unlike the Boolean OR which is simply the union of sets).
Their conjunction is the logical AND
\begin{eqnarray}\label{V2}
h_1\wedge h_2=h_1\cap h_2.
\end{eqnarray}

Let $H(\{i\})$ be the one-dimensional subspace that contains the vector $\ket{i}$, 
and $H(A)$ be the subspace spanned by all the states $\ket{i}$ with $i\in A\subseteq \Omega$:
\begin{eqnarray}
H(A)=\bigvee _{i\in A}H(\{i\}).
\end{eqnarray}
We call $\Pi[H(A)]$ or for simplicity $\Pi(A)$ the projector to the subspace $H(A)$. In particular
\begin{eqnarray}
\Pi(\{i\})=\ket{i}\bra{i};\;\;\;\Pi(\emptyset)=0.
\end{eqnarray}
There are $2^n$ projectors $\Pi(A)$. 
If $|A|\ge d$ then $\Pi(A)={\bf 1}$, so some of these $2^n$ projectors are equal to ${\bf 1}$.
Also
\begin{eqnarray}
&&{\rm Tr}[\Pi(A)]=|A|\;\;{\rm if}\;\;|A|<d\nonumber\\
&&{\rm Tr}[\Pi(A)]=d\;\;{\rm if}\;\;|A|\ge d.
\end{eqnarray}

In general
\begin{eqnarray}\label{377}
\Pi(A)\ne \sum _{i\in A}\Pi(\{i\}).
\end{eqnarray}
Only if the kets $\ket{i}$ where $i\in A$ are orthogonal to each other, we get equality in this equation.

In practical calculations, the projectors $\Pi(A)$ can be calculated as follows. 
We express the vectors $\ket{i}$ where $i\in A$, as $d\times 1$ columns, and then write the $d\times |A|$ matrix ${\mathfrak A}$ which has as columns these vectors.
The projector $\Pi(A)$ is given by
\begin{eqnarray}\label{670A}
\Pi(A)={\mathfrak A}({\mathfrak A}^\dagger {\mathfrak A})^{-1}{\mathfrak A}^\dagger.
\end{eqnarray}

The M\"obius transform of the projectors $\Pi(A)$, is given by:
\begin{eqnarray}\label{m11}
{\mathfrak D} (B)=\sum _{A\subseteq B} (-1)^{|A|-|B|}\Pi(A);\;\;\;\;A,B\subseteq\Omega.
\end{eqnarray}
The inverse M\"obius transform is
\begin{eqnarray}\label{m13}
\Pi (A)=\sum _{B\subseteq A}{\mathfrak D} (B)=\sum _{i\in A}\Pi(\{i\})+\sum _{i,j\in A}{\mathfrak D} (\{i,j\})+\sum _{i,j,k\in A}{\mathfrak D} (\{i,j,k\})+....
\end{eqnarray}

If Eq.(\ref{377}) is equality (which occurs in the case of an orthonormal basis), then ${\mathfrak D} (A)=0$ for all subsets with cardinality greater or equal to $2$.
Therefore the importance of the ${\mathfrak D} (A)$ lies in the fact that Eq.(\ref{377}) is inequality.

The analogue of the generalized M\"obius like relation in Eq.(\ref{X}) is in the present context:
 \begin{eqnarray}\label{XX}
\sum _A\Pi(A)q(A)=\sum _A{\mathfrak D}(A)Q(A).
\end{eqnarray}

The projectors $\Pi(A)$ in Eq.(\ref{m11}) do not commute (in general) with each other.
Given a density matrix $\rho$, the various ${\rm Tr}[\rho \Pi(A)]$ can be measured using different ensembles (described by the same density matrix $\rho$), for the various projectors.
Then the expectation values ${\rm Tr}[\rho{\mathfrak D} (B)]$ of the
 M\"obius operators ${\mathfrak D} (B)$ can be calculated.

\subsection{Discrete resolutions of the identity in terms of density matrices}

In refs\cite{Vou1,Vou2}, inspired by Shapley's methodology in cooperative game theory, we have rewritten Eq.(\ref{m13}) as
\begin{eqnarray}\label{qaz}
\Pi (A)=\sum _{i\in A}\theta _A(i);\;\;\;\theta _A(i)=\Pi(\{i\})+\frac{1}{2}\sum _{j\in A}{\mathfrak D} (\{i,j\})+\frac{1}{3}\sum _{j,k\in A}{\mathfrak D} (\{i,j,k\})+....
\end{eqnarray} 
As in Eq.(\ref{b89}), the joint parts of the subspaces $H(\{i\})$ quantified with the operators ${\mathfrak D} (\{i,j\}), {\mathfrak D} (\{i,j,k\})$, etc, are divided 
equally among all its `owners'.
Using terminology from quantum field theory, we can say that Eq.(\ref{qaz}) starts with the `bare' operators $\Pi(\{i\})$ and produces the `dressed' or renormalized operators $\theta _A(i)$.
If $A=\Omega$ we get
$\sum _{i\in \Omega}\theta _{\Omega}(i)={\bf 1}$.
We have shown in ref.\cite{Vou1,Vou2} that the 
\begin{eqnarray}\label{ppp}
&&\sigma _{\Omega}(i)=\frac{n}{d}\theta _{\Omega}(i);\;\;\;\frac{d}{n}\sum _{i=1}^n\sigma _{\Omega}(i)={\bf 1};\;\;\;i=1,...,n
\end{eqnarray}
are $n$ density matrices, which do not commute with each other, and which resolve the identity.
We have also shown that they can be written as
\begin{eqnarray}\label{rrr}
&&\sigma _{\Omega}(i)=\frac{n}{d}\sum _{a=1}^d\Lambda _{a}(i)B(a, n-a+1)\nonumber\\
&& \Lambda _{a}(i)=\sum _{|A|=a, A\ni i} [\Pi(A)-\Pi (A\setminus \{i\})];\;\;\;{\rm Tr} \Lambda _{a}(i)=
\begin{pmatrix}
n-1\\
a-1
\end{pmatrix}
\end{eqnarray}
The $\Pi(A)-\Pi(A\setminus \{i\})$ are projectors which give the `increment' as we go from the $(a-1)$-dimensional subspace $H(A\setminus \{i\})$  to the  $a$-dimensional subspace $H(A)$ which contains the vector $\ket{i}$.
The $\Lambda _{a}(i)$  are an `average increment' when we add the vector $\ket{i}$ to an $(a-1)$-dimensional subspace with vectors in the pre-basis other than $\ket{i}$.

The $ \Lambda _{a}(i)$ are Hermitian positive semi-definite matrices which are not projectors, and which for a given pre-basis are fixed. 
The summation over $a$ is up to $d$, because for $a\ge d+1$ we get $ \Lambda _{a}(i)=0$.

Eqs.(\ref{ppp}),(\ref{rrr}) are analogous to Eq.(\ref{b90}) and we interpret it as follows.
We order the subspaces $H(\{i_1\}),...,H(\{i_n\})$ and build gradually their disjunctions as 
\begin{eqnarray}
H(\{i_1\}), H(\{i_1, i_2\}), ...,H(d).
\end{eqnarray} 
All $n!$ orders are equally likely. 
The corresponding projectors are
\begin{eqnarray}
\Pi(\{i_1\}), \Pi(\{i_1, i_2\}), ...,{\bf 1}.
\end{eqnarray} 
Then $\frac{d}{n}\sigma_\Omega (i)$ is the average contribution of $\Pi(\{i\})$ to ${\bf 1}$.
In particular the Beta function in Eq.(\ref{ppp}) is the probability that $H(\{i\})$ will join the above disjunction after the subspaces with indices in $A\setminus \{i\}$ and before the 
subspaces with indices in $\Omega \setminus A$.

We  see this with an argument analogous to the one in section \ref{sec20} for sets. We first form the disjunction 
\begin{eqnarray}
H(A\setminus \{i\})=\bigvee _{j\in A\setminus \{i\}}H(\{j\}).
\end{eqnarray}
There are $(a-1)!$ orders for doing this.
Then we form the disjunction $H(A\setminus \{i\})\vee H(\{i\})=H(A)$, and then the disjunction 
\begin{eqnarray}
H(A)\bigvee _{k\in \Omega\setminus A}H(\{k\})=H(d).
\end{eqnarray}
There are $(n-a)!$ orders for doing this last step. Since all $n!$ orders are equally likely, this leads to the probability in Eq.(\ref{beta}) that involves the Beta function.
This probability is multiplied by the matrix $\Lambda _{a}(i)$ which contains the contribution of $\Pi(i)$ to $\Pi(A-\setminus\{i\})$, for all $A$ with $|A|=a$ that contain the element $i$.

Using the resolution of the identity in Eq.(\ref{ppp}), we can expand an arbitrary vector $\ket{s}$ in this basis:
\begin{eqnarray}\label{vb}
\ket{s}=\sum _{i=1}^n\ket{s_i};\;\;\;\ket{s_i}=\frac{d}{n} \sigma _{\Omega}(i)\ket{s}=\sum _{a=1}^n\left [B(a,n-a+1)\Lambda _{a}(i)\ket{s}\right ].
\end{eqnarray}
We have shown  in ref.\cite{Vou1,Vou2} that such an expansion is sensitive to physical changes and insensitive to noise.
This is due to the redundancy in the formalism.

 \section{Random sets and their probabilities}\label{GGG}

We consider the set $\Omega$ that contains the indices $1,...,n$.
We create another set $A$ (which is a subset of $\Omega$) that contains the $1,...,n$ with
probabilities $p_1,...,p_n$, which are independent of each other:
\begin{eqnarray}
(p_1,...,p_n)\in [0,1]^n.
\end{eqnarray}
We repeat this experiment $N$ times (where $N$ is large enough for statistics to make sense), and we get a collection of sets $A$, which we denote as ${\cal I}(p_1,...,p_n)$.
For each $A$ we also consider the corresponding set ${\mathfrak S}(A)$ that is the union of the sets $S_i$ with indices in $A$.
We call ${\cal S}(p_1,...,p_n)$ the collection of these sets.

Every subset $A\subseteq \Omega$ belongs to ${\cal I}(p_1,...,p_n)$ with multiplicity $N{\mathfrak p}(A)$, where ${\mathfrak p}(A)$ is given below in Eq.(\ref{27}).
The collection ${\cal I}(p_1,...,p_n)$ can be viewed as a multiset (which is a generalization of the concept of set, that allows the same element $A$ to appear many times).

In the large $N$ limit, and with the multiplicity taken into account:
\begin{itemize}
\item
The percentage of sets $A$ in ${\cal I}(p_1,...,p_n)$  or equivalently the percentage of sets ${\mathfrak S}(A)$ in ${\cal S}(p_1,...,p_n)$,
is: 
\begin{eqnarray}\label{27}
{\mathfrak p}(A)=\prod _{i\in A}p_i\prod _{j\in {\overline A}}(1-p_j);\;\;\;\sum _{A\subseteq \Omega}{\mathfrak p}(A)=1.
\end{eqnarray}
Here $\overline A=\Omega \setminus A$ is the complement of the set $A$. 
In the product we include the probabilities $p_i$ for the indices that belong to $A$, and the probabilities $1-p_j$ for the indices that do not belong to $A$.
Special cases are:
\begin{itemize}
\item
If $A=\emptyset$ or $A=\Omega$ we get
\begin{eqnarray}\label{27a}
{\mathfrak p}(\emptyset)=(1-p_1)...(1-p_n);\;\;\;{\mathfrak p}(\Omega)=p_1...p_n
\end{eqnarray}
\item
If $p_i=1$ when $i\in A$ and $p_i=0$ when $ i\notin A$,
then
\begin{eqnarray}\label{AB1}
{\mathfrak p}(B)=\delta (B,A).
\end{eqnarray}
In particular if $p_1=...=p_n=1$, then 
\begin{eqnarray}\label{AB2}
{\mathfrak p}(B)=\delta (B,\Omega).
\end{eqnarray}
\item
If $p_1=...=p_n=p$ then
\begin{eqnarray}\label{AB4}
{\mathfrak p}(A)=p^a(1-p)^{n-a};\;\;\;a=|A|.
\end{eqnarray}
For small values of the probability $p$ the ${\mathfrak p}(A)$ are large, for small sets $A$ (small $a$).
For large values of the probability $p$ the ${\mathfrak p}(A)$ are large, for large sets $A$ (large $a$).
We note that ${\mathfrak p}(A)$ remains invariant under the transformations
\begin{eqnarray}
p\;\rightarrow\;1-p^{\prime};\;\;\;a\;\rightarrow\;n-a^{\prime}.
\end{eqnarray}
\end{itemize}

\item
Let $A$ be a given subset of $\Omega$. The percentage of sets in ${\cal I}(p_1,...,p_n)$ that are supersets of $A$
or equivalently the percentage of sets in ${\cal S}(p_1,...,p_n)$ that are supersets of ${\mathfrak S}(A)$, is: 
\begin{eqnarray}
{ P}(A)=\prod _{i\in A}p_i.
\end{eqnarray}
In the product we include the probabilities $p_i$ for the indices that belong to $A$.
The indices that do not belong to $A$ might or might not belong to the set, and for this reason their probabilities are not included in the product.
$P(A)$ are cumulative probabilities in the sense that
\begin{eqnarray}\label{456}
{ P}(A)=\sum _{B\supseteq A}{\mathfrak p}(B).
\end{eqnarray}
Special cases are:
\begin{itemize}
\item
If $A=\emptyset$ or $A=\Omega$ we get
\begin{eqnarray}
{ P}(\emptyset)=1;\;\;\;P(\{i\})=p_i;\;\;\;P(\Omega)=p_1...p_n.
\end{eqnarray}
\item
If $p_i=1$ when $i\in A$ and $p_i=0$ when $ i\notin A$,
then
\begin{eqnarray}
&&P(B)=1\;{\rm when}\;B\subseteq A\nonumber\\
&&P(B)=0\;{\rm otherwise}.
\end{eqnarray}
In particular if $p_1=...=p_n=1$, then  $P(B)=1$ for all $B$.
\item
If $p_1=...=p_n=p$ then
\begin{eqnarray}
P(A)=p^{|A|}.
\end{eqnarray}
\end{itemize}

\item
If $A$ is a given subset of $\Omega$, the percentage of sets in ${\cal I}(p_1,...,p_n)$ that contain no elements of $A$,
or equivalently the percentage of sets in ${\cal S}(p_1,...,p_n)$ that contain no elements of ${\mathfrak S}(A)$, is:
\begin{eqnarray}\label{a1}
P(\neg A)=\prod _{i\in A}(1-p_i).
\end{eqnarray}
Special cases are:
\begin{eqnarray}
{ P}(\neg \emptyset)=1;\;\;\;P(\neg \Omega)=(1-p_1)...(1-p_n).
\end{eqnarray}
We note that
\begin{eqnarray}\label{44}
{\mathfrak p}(A)=P(A)P(\neg {\overline A}).
\end{eqnarray}
\end{itemize}
We note that the ${\mathfrak p}(A), P(A)$ are special cases of the $q(A), Q(A)$ correspondingly, introduced in section \ref{SEC1}.

\begin{definition}\label{def100}
\begin{itemize}
\mbox{}
\item[(1)]
A random set ${\mathfrak I}(p_1,...,p_n)$ of indices is the $2^n$ subsets $A$ of $\Omega$, with the probabilities ${\mathfrak p}(A)$ in Eq.(\ref{27}) attached to each of them.
\begin{eqnarray}
{\mathfrak I}(p_1,...,p_n)=\{(A, {\mathfrak p}(A))\;|\;A\subseteq \Omega\});\;\;\;\sum _{A\subseteq \Omega}{\mathfrak p}(A)=1.
\end{eqnarray}
The $2^n$ probabilities ${\mathfrak p}(A)$ are polynomials of the $n$ probabilities $p_1,...,p_n$, and in this sense we have $n$ degrees of freedom (not $2^n$).
\item[(2)]
Given the sets $S_1,...,S_n$, the random set ${\mathfrak S}(p_1,...,p_n)$, is the $2^n$ sets ${\mathfrak S}(A)$ in Eq.(\ref{1}) (for all subsets $A$ of $\Omega$), with the probabilities ${\mathfrak p}(A)$ in Eq.(\ref{27}) attached to each of them:
\begin{eqnarray}
{\mathfrak S}(p_1,...,p_n)=\{({\mathfrak S}(A), {\mathfrak p}(A))\;|\;A\subseteq \Omega\});\;\;\;\sum _{A\subseteq \Omega}{\mathfrak p}(A)=1.
\end{eqnarray}
\end{itemize}
\end{definition}

\begin{remark}
Related concept to our random set is the fuzzy set \cite{Z,Z1} where each element has a degree of membership (the analogue of our probabilities $p_i$).
This has been used extensively in the context of artificial intelligence.
Similar concept with the name ideal sets is used in ref\cite{G5} (page 142) in the context of games with a continuum of players.
Our definition of random sets is easily generalised to random projectors later, which can be implemented experimentally in the context of quantum theory.
\end{remark}

\begin{proposition}
\mbox{}
\begin{itemize}
\item[(1)]
\begin{eqnarray}\label{300}
\sum _{A\ni i}{\mathfrak p}(A)=p_i.
\end{eqnarray}
In the large $N$ limit, the percentage of the index $i$  in all sets in ${\cal I}(p_1,...,p_n)$, is $p_i$.
\item[(2)]
\begin{eqnarray}\label{78}
&&P(A)P(B)=P(A\cup B)P(A\cap B)\nonumber\\
&&P(\neg A)P(\neg B)=P[\neg (A\cup B)]P[\neg (A\cap B)]\nonumber\\
&&{\mathfrak p}(A){\mathfrak p}(B)={\mathfrak p}(A\cup B){\mathfrak p}(A\cap B).
\end{eqnarray}
\item[(3)]
\begin{eqnarray}\label{a2}
P(\neg A)=1+\sum _{B\subseteq A}(-1)^{|B|}P(B)=1-\sum _{i\in A}p_i+\sum _{i,j\in A} p_ip_j-\sum _{i,j,k\in A}p_ip_jp_k+....
\end{eqnarray}
\item[(4)]
If $A\subseteq B$ then
\begin{eqnarray}
P(A)\ge P(B);\;\;\;P(\neg A)\ge P(\neg B).
\end{eqnarray}
\item[(5)]
The derivatives of ${\mathfrak p}(A)$ are:
\begin{eqnarray}\label{a46}
&&{\rm if}\;i\in A\;{\rm then}\;\frac{\partial {\mathfrak p}(A)}{\partial p_i}=\frac{{\mathfrak p}(A)}{p_i}\ge 0\nonumber\\
&&{\rm if}\;i\notin A\;{\rm then}\;\frac{\partial {\mathfrak p}(A)}{\partial p_i}=-\frac{{\mathfrak p}(A)}{1-p_i}\le 0.
\end{eqnarray}
The values of these partial derivatives at the point $p_1=...=p_n=p$, are
\begin{eqnarray}\label{478}
&&{\rm if}\;i\in A\;{\rm then}\;;\frac{\partial {\mathfrak p}(A)}{\partial p_i}\left |\right ._{p_1=...=p_n=p}=p^{a-1}(1-p)^{n-a};\;\;\;a=|A|\nonumber\\
&&{\rm if}\;i\notin A\;{\rm then}\;;\frac{\partial {\mathfrak p}(A)}{\partial p_i}\left |\right ._{p_1=...=p_n=p}=-p^a(1-p)^{n-a-1}.
\end{eqnarray}
\end{itemize}
\end{proposition}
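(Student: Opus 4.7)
The plan is to prove each part by direct manipulation of the product formulas for $\mathfrak{p}(A)$, $P(A)$, and $P(\neg A)$, since the statements are essentially algebraic identities or monotonicity facts about products of numbers in $[0,1]$. I would handle the five items in order, and none of them requires an inductive or structural argument beyond expanding products and grouping factors by which subset an index falls into.

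For (1), I would write, for any $A$ containing $i$, $\mathfrak{p}(A)=p_i\prod_{j\in A\setminus\{i\}}p_j\prod_{k\in\overline A}(1-p_k)$, and then parameterize the sets $A\ni i$ by $A'=A\setminus\{i\}\subseteq\Omega\setminus\{i\}$. The sum factors as $p_i\sum_{A'\subseteq\Omega\setminus\{i\}}\prod_{j\in A'}p_j\prod_{k\in(\Omega\setminus\{i\})\setminus A'}(1-p_k)$, and the inner sum equals $\prod_{j\neq i}(p_j+(1-p_j))=1$, yielding $p_i$. For (2), each of the three identities reduces to comparing the multiplicities with which each index appears in the left- and right-hand products; elements of $A\cap B$ contribute $p_i^2$ on both sides (once from $A\cup B$, once from $A\cap B$), and elements of the symmetric difference contribute $p_i$ once on each side, so equality is immediate. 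The analogous bookkeeping works for $P(\neg A)P(\neg B)$ with factors $1-p_i$, and the $\mathfrak{p}$ identity then follows by combining via Eq.(\ref{44}).

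For (3), I would simply expand the product $P(\neg A)=\prod_{i\in A}(1-p_i)$ using the distributive law to get $\sum_{B\subseteq A}(-1)^{|B|}\prod_{i\in B}p_i=\sum_{B\subseteq A}(-1)^{|B|}P(B)$, and separate off the $B=\emptyset$ term as the leading $1$. For (4), monotonicity is immediate: $P(A)=\prod_{i\in A}p_i$ and $P(B)=P(A)\prod_{i\in B\setminus A}p_i$, and since each factor $p_i\in[0,1]$ the extra product is at most $1$; the same argument with $1-p_i$ in place of $p_i$ gives $P(\neg A)\ge P(\neg B)$. For (5), I would isolate the factor depending on $p_i$ in $\mathfrak{p}(A)$: when $i\in A$ this factor is $p_i$ and differentiating gives $\mathfrak{p}(A)/p_i$; when $i\notin A$ the factor is $1-p_i$ and differentiation gives $-\mathfrak{p}(A)/(1-p_i)$. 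Setting $p_1=\dots=p_n=p$ and using Eq.(\ref{AB4}) reduces the derivatives to $p^{a-1}(1-p)^{n-a}$ and $-p^a(1-p)^{n-a-1}$ respectively.

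There is no genuinely hard step here; the whole proposition is a collection of elementary consequences of the product forms for ${\mathfrak p}, P, P(\neg\cdot)$. The one place a reader might stumble is (2), where the indexing of factors on the two sides looks asymmetric; I would make that step explicit by splitting each index $i\in A\cup B$ according to whether it lies in $A\setminus B$, $B\setminus A$, or $A\cap B$, and matching the contributions factor by factor. Once that is written down carefully, the rest is bookkeeping.
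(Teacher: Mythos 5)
Your proof is correct and follows essentially the same route as the paper: part (1) by factoring out $p_i$ and summing over subsets of $\Omega\setminus\{i\}$, part (2) by verifying the first two product identities directly and then passing to the third via Eq.(\ref{44}) applied to complements, and parts (3)--(5) by direct expansion, monotonicity of factors in $[0,1]$, and differentiation of the single factor containing $p_i$. The only detail worth making explicit in (2) is that deducing the ${\mathfrak p}$ identity from Eq.(\ref{44}) also uses the second identity applied to $\overline A,\overline B$ together with De Morgan's laws, which the paper spells out.
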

\begin{proof}
\begin{itemize}
\item[(1)]
We express the subsets of $\Omega$ that contain $i$ as
\begin{eqnarray}
A=\{i\}\cup A_1;\;\;\;A_1\subseteq \Omega _1=\Omega \setminus \{i\}.
\end{eqnarray}
Then ${\mathfrak p}(A)=p_i{\mathfrak p}(A_1)$ and 
\begin{eqnarray}
\sum _{A\ni i}{\mathfrak p}(A)=p_i\sum _{A_1\subseteq \Omega _1}{\mathfrak p}(A_1)=p_i.
\end{eqnarray}
We used here Eq.(\ref{27}) for the set $\Omega _1$.
\item[(2)]
The proof of the first two relations in Eq.(\ref{78}) is straightforward.
Using the second of these relations and the fact that
\begin{eqnarray}
\overline {A\cup B}=\overline A\cap \overline B;\;\;\;\overline {A\cup B}=\overline A\cap \overline B
\end{eqnarray}
we prove that
\begin{eqnarray}
P(\neg \overline A)P(\neg \overline B)=P[\neg \overline {(A\cup B)}]P[\neg \overline{(A\cap B)}]
\end{eqnarray}
We then use Eq.(\ref{44}) to prove the third relation in Eq.(\ref{78}).
\item[(3)]
We perform the multiplication in the right hand side of Eq.(\ref{a1}) and we get the terms shown in Eq.(\ref{a2}).
\item[(4)]
If $A\subseteq B$ it is straightforward to prove that $P(A)\ge P(B)$ and $P(\neg A)\ge P(\neg B)$.
\item[(5)]
This is proved with straightforward differentiation.
\end{itemize}
\end{proof}

The average cardinality of the random set ${\mathfrak I}(p_1,...,p_n)$ is
\begin{eqnarray}
\widehat {\mathfrak I}(p_1,...,p_n) =\sum _{A\subseteq \Omega}|A|{\mathfrak p}(A)=p_1+...+p_n.
\end{eqnarray}
This is because the element $i$ is included with probability $p_i$. 

\begin{proposition}\label{PRO2}
\begin{itemize}
\item[(1)]
The average cardinality of the random set ${\mathfrak S}(p_1,...,p_n)$ is
\begin{eqnarray}\label{a45}
\widehat {\mathfrak S}(p_1,...,p_n) =\sum _{A\subseteq \Omega}\mu (A){\mathfrak p}(A)=\sum _{A\subseteq \Omega}{\mathfrak d} (A)P(A).
\end{eqnarray}
\item[(2)]
In the special case that $p_i=1$ when $i\in A$ and $p_i=0$ when $ i\notin A$, we get $\widehat {\mathfrak S}(p_1,...,p_n) =\mu (A)$.
In particular $\widehat {\mathfrak S}(1,...,1)=\mu(\Omega)$ and $\widehat {\mathfrak S}(0,...,0)=0$.
\item[(3)]
In the special case of non-overlapping sets ($S_i\cap S_j=\emptyset$ for all $i,j$)
\begin{eqnarray}\label{v4}
\widehat {\mathfrak S}(p_1,...,p_n) =\sum _{i}\mu (\{i\})p_i.
\end{eqnarray}
\end{itemize}
\item[(4)]
In the special case that $p_1=...=p_n=p$, we get
\begin{eqnarray}\label{fe}
\widehat {\mathfrak S}(p,...,p) =\sum _{a=1}^n\mu _a p^a(1-p)^{n-a};\;\;\;\mu_a=\sum _{|A|=a}\mu (A)
\end{eqnarray}
\end{proposition}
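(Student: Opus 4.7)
The proposition is a collection of four essentially mechanical consequences of the identities already established for $\mathfrak{p}(A)$, $P(A)$, and $\mu(A)$. My plan is to treat each part by invoking the appropriate earlier result.

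For part (1), the first equality is simply the definition of expected cardinality: each random set $\mathfrak{S}(A)$ occurs with probability $\mathfrak{p}(A)$ and contributes cardinality $\mu(A)$. The second equality is the key content, and I would obtain it by applying Lemma \ref{PRO1} with the choice $q(A) = \mathfrak{p}(A)$. Under this choice the associated cumulative quantity $Q(A) = \sum_{B \supseteq A} q(B)$ becomes exactly $\sum_{B \supseteq A} \mathfrak{p}(B) = P(A)$ by Eq.(\ref{456}), so Lemma \ref{PRO1} yields $\sum_A \mu(A)\mathfrak{p}(A) = \sum_A \mathfrak{d}(A)P(A)$ immediately.

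For parts (2) and (3), I would reduce each to a one-line application of a previously displayed identity. In part (2), the hypothesis on $(p_1,\ldots,p_n)$ forces $\mathfrak{p}(B) = \delta(B,A)$ by Eq.(\ref{AB1}), so the sum $\sum_B \mu(B)\mathfrak{p}(B)$ collapses to $\mu(A)$; the boundary cases $(1,\ldots,1)$ and $(0,\ldots,0)$ then follow with $A = \Omega$ and $A = \emptyset$ respectively (using $\mu(\emptyset) = 0$). For part (3), I invoke the second conclusion of Lemma \ref{PRO1}, which in the non-overlapping case reduces $\sum_A \mu(A)q(A)$ to $\sum_i \mu(\{i\})\mathfrak{Q}_i$, and then note that $\mathfrak{Q}_i = \sum_{A \ni i}\mathfrak{p}(A) = p_i$ by Eq.(\ref{300}).

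For part (4), I would substitute the constant-probability formula $\mathfrak{p}(A) = p^{|A|}(1-p)^{n-|A|}$ from Eq.(\ref{AB4}) into $\widehat{\mathfrak S}(p,\ldots,p) = \sum_A \mu(A)\mathfrak{p}(A)$ and then regroup the sum by cardinality $a = |A|$, which produces the claimed expression with $\mu_a = \sum_{|A|=a}\mu(A)$; the term $a=0$ drops out since $\mu(\emptyset) = 0$, allowing the sum to start at $a=1$. There is no real obstacle anywhere; the only point requiring a brief justification is verifying that $Q(A) = P(A)$ when $q = \mathfrak{p}$, which is precisely the content of Eq.(\ref{456}). The whole proposition is therefore a corollary of Lemma \ref{PRO1} together with the elementary identities Eqs.(\ref{27}), (\ref{300}), (\ref{AB1}), and (\ref{AB4}) already established for the random-set probabilities.
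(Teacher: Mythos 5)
Your proposal is correct and follows essentially the same route as the paper: part (1) via Lemma \ref{PRO1} with $q(A)={\mathfrak p}(A)$ and the identification $Q(A)=P(A)$ from Eq.(\ref{456}), part (2) via Eqs.(\ref{AB1}),(\ref{AB2}), part (3) via the non-overlapping case of Lemma \ref{PRO1} together with ${\mathfrak Q}_i=p_i$ from Eq.(\ref{300}), and part (4) by substituting Eq.(\ref{AB4}) and regrouping by cardinality. No gaps; your treatment is, if anything, slightly more explicit than the paper's.
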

\begin{proof}
\begin{itemize}
\item[(1)]
We use lemma \ref{PRO1} with $q(A)\;\rightarrow \;{\mathfrak p}(A)$. Then the definitions in Eq.(\ref{45A}) give
\begin{eqnarray}
Q(A)\;\rightarrow \;P(A);\;\;\;{\mathfrak Q}_i\;\rightarrow \;p,
\end{eqnarray}
as it is seen from Eqs.(\ref{456}), (\ref{300}).
\item[(2)]
This is proved using Eqs(\ref{AB1}), (\ref{AB2}).
\item[(3)]
This is proved using Eq.(\ref{X}).
\item[(4)]
This is proved using Eq.(\ref{AB4}).
\end{itemize}
\end{proof}
\begin{remark}
In Eq.(\ref{fe}) for small values of the probability $p$, the small sets $A$ (small $a$) contribute most.
For large values of the probability $p$, the large sets $A$ (large $a$) contribute most.
\end{remark}

\section{A continuous partition of the total cardinality of a finite number of overlapping sets}

\subsection{An infinite number of cardinality densities}\label{sec25}
Let $m$ be a `mass' distributed along the curve ${\cal C}=(x_1(t),...,x_n(t))$ in the $n$-dimensional $x$-space.
The 
\begin{eqnarray}
\frac{dm}{dt}=\sum _i\frac{\partial m}{\partial x_i}\frac{dx_i}{dt},
\end{eqnarray}
is mass density along this curve, and its line integral gives the total value of $m$.
In our context we give the following definition.
\begin{definition}
The density of the average cardinality  $\widehat {\mathfrak S}(p_1,...,p_n)$ along a curve
${\cal C}=(p_1(t),...,p_n(t))$ in the $n$-dimensional probability space, is
\begin{eqnarray}
\frac{d\widehat {\mathfrak S}}{dt}=\sum _i\frac{\partial \widehat {\mathfrak S}}{\partial p_i}\frac{dp_i}{dt}.
\end{eqnarray}
\end{definition} 
\begin{lemma}\label{LE1}
The $\frac{\partial \widehat {\mathfrak S}}{\partial p_i}$ is given by
 \begin{eqnarray}\label{11A}
 \frac{\partial \widehat {\mathfrak S}}{\partial p_i}=
 \sum _{A\ni i}\left [\mu(A)-\mu (A\setminus \{i\})\right ]\frac{{\mathfrak p}(A)}{p_i}\ge 0.
 \end{eqnarray}
 In the special case $p_1=...=p_n=p$ this reduces to the following polynomial of order $n-1$:
 \begin{eqnarray}\label{a34}
  \frac{\partial \widehat {\mathfrak S}}{\partial p_i}=\sum _{a=1}^n\lambda _{a}(i)p^{a-1}(1-p)^{n-a}
 \end{eqnarray}
 The $ \lambda _{a}(i)$ have been defined in Eq.(\ref{b90}).
\end{lemma}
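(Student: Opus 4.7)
The plan is to differentiate $\widehat{\mathfrak S}(p_1,\ldots,p_n)=\sum_{A\subseteq\Omega}\mu(A)\mathfrak p(A)$ term by term, using the already-proved derivative formulas in Eq.(\ref{a46}). First I would split the resulting sum according to whether $i\in A$ or $i\notin A$, giving
\begin{equation*}
\frac{\partial\widehat{\mathfrak S}}{\partial p_i}
=\sum_{A\ni i}\mu(A)\,\frac{\mathfrak p(A)}{p_i}
-\sum_{A\not\ni i}\mu(A)\,\frac{\mathfrak p(A)}{1-p_i}.
\end{equation*}

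Next I would re-index the second sum via the bijection $A\leftrightarrow A\cup\{i\}$ between subsets not containing $i$ and subsets containing $i$. The key algebraic identity is that for $B\ni i$ one has $\mathfrak p(B\setminus\{i\})=\mathfrak p(B)\,(1-p_i)/p_i$, which comes directly from the product formula (\ref{27}) by replacing the factor $p_i$ with $1-p_i$. Therefore
\begin{equation*}
\sum_{A\not\ni i}\mu(A)\,\frac{\mathfrak p(A)}{1-p_i}
=\sum_{B\ni i}\mu(B\setminus\{i\})\,\frac{\mathfrak p(B)}{p_i},
\end{equation*}
and combining the two sums yields Eq.(\ref{11A}). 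Non-negativity is immediate from the monotonicity $\mu(A)\ge\mu(A\setminus\{i\})$ stated just after Eq.(\ref{1}).

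For the second formula, in the uniform case $p_1=\cdots=p_n=p$ the product formula (\ref{27}) collapses to $\mathfrak p(A)=p^{|A|}(1-p)^{n-|A|}$, so $\mathfrak p(A)/p_i=p^{|A|-1}(1-p)^{n-|A|}$ when $i\in A$. Grouping the sum in Eq.(\ref{11A}) by the cardinality $a=|A|$ and using the definition of $\lambda_a(i)$ from Eq.(\ref{b90}) produces Eq.(\ref{a34}) directly.

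There is no real obstacle here; the one step that needs care is the re-indexing of the second sum and the book-keeping of the $p_i/(1-p_i)$ factor, which is why I would isolate that algebraic identity explicitly before combining sums. Everything else is differentiation of a polynomial and a regrouping by cardinality.
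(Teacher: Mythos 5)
Your proposal is correct and follows essentially the same route as the paper: term-by-term differentiation via Eq.(\ref{a46}), the split over $A\ni i$ versus $A\not\ni i$, the re-indexing bijection $A\leftrightarrow A\cup\{i\}$ combined with the identity $\frac{{\mathfrak p}(A\setminus\{i\})}{1-p_i}=\frac{{\mathfrak p}(A)}{p_i}$, and, for the uniform case, evaluation of ${\mathfrak p}(A)/p_i$ at $p_1=\cdots=p_n=p$ followed by grouping over $a=|A|$ to recover $\lambda_a(i)$. The only cosmetic difference is that you isolate the algebraic identity explicitly while the paper invokes it after merging the two sums.
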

\begin{proof}
From Eqs(\ref{a45}),(\ref{a46}) we get
\begin{eqnarray}\label{111}
 \frac{\partial \widehat {\mathfrak S}}{\partial p_i}=
 \sum _{A\ni i}\mu(A)\frac{{\mathfrak p}(A)}{p_i}- \sum _{A\not \ni i}\mu (A)\frac{{\mathfrak p}(A)}{1-p_i}
 \end{eqnarray}
There is a bijective map between the sets $A$ that contains the element $i$, and the sets $A\setminus\{i\}$.
Therefore Eq.(\ref{111}) is equivalent to
 \begin{eqnarray}
 \frac{\partial \widehat {\mathfrak S}}{\partial p_i}=
 \sum _{A\ni i}\left [\mu(A)\frac{{\mathfrak p}(A)}{p_i}-\mu (A\setminus \{i\})\frac{{\mathfrak p}(A\setminus \{i\})}{1-p_i}\right ].
 \end{eqnarray}
 We then use the fact that
  \begin{eqnarray}
\frac{{\mathfrak p}(A)}{p_i}=\frac{{\mathfrak p}(A\setminus \{i\})}{1-p_i},
 \end{eqnarray}
and we get the expression in Eq.(\ref{11A}) which is easily seen to be non-negative.

In the special case that the probabilities are equal to each other, using Eq.(\ref{478}) we get the expression in Eq.(\ref{a34}). 
\end{proof}

\begin{lemma}
Let
 \begin{eqnarray}
r_a=B(a,n-a+1)\sum _{i=1}^n\lambda _{a}(i)=\sum _{(A, A_-)} B(a,n-a+1) [\mu(A)-\mu (A_-)]
\end{eqnarray}
The last summation is over all pairs of sets $(A, A_-)$ with $|A|=a$, $A_-\subset A$ and $|A_-|=a-1$.
Then
\begin{eqnarray}\label{LLL}
\sum _{a=1}^nr_a=\mu (\Omega).
\end{eqnarray}
\end{lemma}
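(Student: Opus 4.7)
The plan is to reduce the claim directly to the Shapley identity $\sum_{i=1}^n M_\Omega(i) = \mu(\Omega)$ from Eq.(\ref{b90}), exploiting the symmetry $B(x,y) = B(y,x)$ of the Beta function.

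First I would recall that Eq.(\ref{b90}) gives
\begin{eqnarray}
M_\Omega(i) = \sum_{a=1}^n B(n-a+1, a)\, \lambda_a(i),
\end{eqnarray}
together with $\sum_{i=1}^n M_\Omega(i) = \mu(\Omega)$. Since the Beta function is symmetric in its two arguments, we have $B(a, n-a+1) = B(n-a+1, a)$, so the prefactor appearing in the definition of $r_a$ is exactly the prefactor in the Shapley expansion of $M_\Omega(i)$.

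Next I would swap the order of summation. Starting from the definition
\begin{eqnarray}
\sum_{a=1}^n r_a = \sum_{a=1}^n B(a, n-a+1) \sum_{i=1}^n \lambda_a(i) = \sum_{i=1}^n \sum_{a=1}^n B(n-a+1, a)\,\lambda_a(i) = \sum_{i=1}^n M_\Omega(i),
\end{eqnarray}
and applying the Shapley identity yields $\mu(\Omega)$. The second equality in the definition of $r_a$ (rewriting $\lambda_a(i)$ as a sum over pairs $(A, A_-)$) follows immediately from the definition of $\lambda_a(i)$ in Eq.(\ref{b90}) by unpacking the inner sum.

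There is really no obstacle here: the argument is essentially a bookkeeping rearrangement combined with the symmetry of $B$, and the substantive content (the Shapley identity itself) is quoted from the already-established Eq.(\ref{b90}). The only thing worth double-checking is that the sum over $a$ can be truncated or extended consistently when $\lambda_a(i) = 0$ for certain $a$, but this does not affect the equality since adding zero terms is harmless.
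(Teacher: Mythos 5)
Your proposal is correct and is essentially the paper's own argument: the paper proves the lemma in one line by noting it follows from Eq.(\ref{b90}), and your interchange of the sums over $a$ and $i$ together with the symmetry $B(a,n-a+1)=B(n-a+1,a)$ and the identity $\sum_{i=1}^n M_\Omega(i)=\mu(\Omega)$ is exactly the content of that step. No gaps; the remark about harmlessly including terms with $\lambda_a(i)=0$ is a fine bit of care but not needed.
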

\begin{proof}
Eq.(\ref{LLL}) follows from Eq.(\ref{b90}).
\end{proof}
$r_a$ gives the average increment in the cardinality as we go from a union of $(a-1)$ sets to a union of $a$ sets.
For example
\begin{eqnarray}
r_1=\frac{1}{n}\sum _{i=1}^n\mu(\{i\});\;\;\;r_2=\frac{2}{n(n-1)}\sum _{i,j}\{[\mu(\{i,j\})-\mu(\{i\})]+[\mu(\{i,j\})-\mu(\{j\})]\}.
\end{eqnarray}

The following proposition introduces an infinite number of continuous partitions of the cardinality of the union of a finite number of overlapping sets.
\begin{proposition}\label{PRO10}
Let  ${\cal C}$ be a differentiable monotonically increasing curve in the probability hypercube $[0,1]^n$ that joins the point $(0,...,0)$ with $(1,...,1)$:
\begin{eqnarray}\label{curve}
{\cal C}=\{p_i(t)\;|\;0\le t\le 1;\;\;p_i(0)=0;\;\;p_i(1)=1;\;\;\frac{dp_i}{dt}\ge 0\}.
\end{eqnarray}
We use the notation
\begin{eqnarray}
\frac{\partial \widehat {\mathfrak S}}{\partial p_i}({\cal C})=\frac{\partial \widehat {\mathfrak S}}{\partial p_i}[p_1(t),...,p_n(t)]
\end{eqnarray}
 Also let ${\cal D}$ be the diagonal
 \begin{eqnarray}\label{diag}
{\cal D}=\{p_i(t)=t\;|\;0\le t\le 1\}
\end{eqnarray}
 \begin{itemize}
\item[(1)]
 The cardinality $\mu (\Omega)$ of the union of all sets with indices in $\Omega$, can be written as
\begin{eqnarray}\label{67}
\mu (\Omega)=\int _0^1dt M_\Omega (t|{\cal C});\;\;\;M_\Omega (t|{\cal C})=\sum _{i=1}^n \frac{\partial \widehat {\mathfrak S}}{\partial p_i}({\cal C}) \frac{dp_i}{dt}\ge 0
\end{eqnarray}
This is a `continuous partition' of $\mu(\Omega)$.
\item[(2)]
{\bf Diagonal property:}
 In the special case that ${\cal C}$ is the diagonal ${\cal D}$, we get
 \begin{eqnarray}\label{hhh}
&&M_\Omega (t|{\cal D})=\sum _{i=1}^n\frac{\partial \widehat {\mathfrak S}}{\partial p_i}({\cal D}) 
=\sum _{a=1}^n\sum _{i=1}^n\lambda _{a}(i)t^{a-1}(1-t)^{n-a}
=\sum _{a=1}^nr_a \frac{t^{a-1}(1-t)^{n-a}}{B(a,n-a+1)}
\nonumber\\
&& \int _0^1dt M_\Omega (t|{\cal D})=\mu (\Omega).
\end{eqnarray}

\end{itemize}
\end{proposition}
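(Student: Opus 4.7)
\textbf{Proof plan for Proposition \ref{PRO10}.}

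My approach is to recognize $M_\Omega(t|{\cal C})$ as the total derivative of $\widehat{\mathfrak S}(p_1(t),\dots,p_n(t))$ along the curve, so that part (1) becomes an application of the fundamental theorem of calculus, and part (2) follows by specializing and invoking Lemma \ref{LE1}.

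For part (1), by the chain rule applied to the composition $t\mapsto\widehat{\mathfrak S}(p_1(t),\dots,p_n(t))$ I would observe
\begin{equation*}
\frac{d}{dt}\widehat{\mathfrak S}(p_1(t),\dots,p_n(t))
=\sum_{i=1}^n \frac{\partial \widehat{\mathfrak S}}{\partial p_i}({\cal C})\,\frac{dp_i}{dt}
=M_\Omega(t|{\cal C}).
\end{equation*}
Integrating from $0$ to $1$ then gives
\begin{equation*}
\int_0^1 M_\Omega(t|{\cal C})\,dt
=\widehat{\mathfrak S}(p_1(1),\dots,p_n(1))-\widehat{\mathfrak S}(p_1(0),\dots,p_n(0)).
\end{equation*}
The endpoint conditions in Eq.~(\ref{curve}) give $(p_1(0),\dots,p_n(0))=(0,\dots,0)$ and $(p_1(1),\dots,p_n(1))=(1,\dots,1)$, and Proposition \ref{PRO2}(2) evaluates these two values to $0$ and $\mu(\Omega)$ respectively, yielding the claimed identity. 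For non-negativity of $M_\Omega(t|{\cal C})$ I combine two facts: Lemma \ref{LE1} gives $\partial\widehat{\mathfrak S}/\partial p_i\ge 0$ (this was shown by rewriting the derivative as $\sum_{A\ni i}[\mu(A)-\mu(A\setminus\{i\})]{\mathfrak p}(A)/p_i$, with all summands $\ge 0$ by monotonicity of $\mu$), and the defining property $dp_i/dt\ge 0$ of a monotone curve.

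For part (2), on the diagonal ${\cal D}$ we have $p_i(t)=t$ for all $i$, so $dp_i/dt\equiv 1$ and the probabilities coincide at the value $p=t$, which is exactly the hypothesis under which Eq.~(\ref{a34}) of Lemma \ref{LE1} applies. Substituting that formula and summing over $i$ gives
\begin{equation*}
M_\Omega(t|{\cal D})=\sum_{i=1}^n\sum_{a=1}^n\lambda_a(i)\,t^{a-1}(1-t)^{n-a},
\end{equation*}
and interchanging the sums together with the definition $r_a=B(a,n-a+1)\sum_i\lambda_a(i)$ yields the last form in Eq.~(\ref{hhh}). Finally, the integral identity is obtained either as an immediate consequence of part (1) applied to ${\cal C}={\cal D}$, or by using the Beta integral $\int_0^1 t^{a-1}(1-t)^{n-a}dt=B(a,n-a+1)$ together with Eq.~(\ref{LLL}) to recover $\sum_{a=1}^n r_a=\mu(\Omega)$.

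The only mild subtlety I anticipate is confirming that Lemma \ref{LE1} may be invoked on the open cube (the formula $\partial\widehat{\mathfrak S}/\partial p_i=\sum_{A\ni i}[\mu(A)-\mu(A\setminus\{i\})]\,{\mathfrak p}(A)/p_i$ is derived under the tacit assumption $p_i\in(0,1)$), whereas the curve ${\cal C}$ passes through the corners $(0,\dots,0)$ and $(1,\dots,1)$. This is harmless because the final, fully expanded form $\sum_{A\ni i}[\mu(A)-\mu(A\setminus\{i\})]\prod_{j\in A\setminus\{i\}}p_j\prod_{j\notin A}(1-p_j)$ is a polynomial in $p_1,\dots,p_n$ that extends continuously (and agrees with $\partial\widehat{\mathfrak S}/\partial p_i$ computed from the polynomial expression Eq.~(\ref{a45})) to the whole closed cube, so the fundamental theorem of calculus applies on $[0,1]$ without issue.
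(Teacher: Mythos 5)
Your proposal is correct and follows essentially the same route as the paper: part (1) is the chain rule plus the fundamental theorem of calculus with the endpoint values $\widehat{\mathfrak S}(0,\dots,0)=0$ and $\widehat{\mathfrak S}(1,\dots,1)=\mu(\Omega)$, with non-negativity from Lemma \ref{LE1} and $dp_i/dt\ge 0$, and part (2) specializes Eq.~(\ref{a34}) on the diagonal and uses the Beta integral together with Eq.~(\ref{LLL}). Your extra remark about extending the derivative formula continuously to the corners of the cube is a harmless refinement the paper leaves implicit.
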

\begin{proof}
\begin{itemize}
\item[(1)]
The first of Eqs(\ref{67}) is definition of $M_\Omega (i,t|{\cal C})$.
The result is non-negative because both $\frac{\partial \widehat {\mathfrak S}}{\partial p_i}({\cal C})\ge 0$ and $\frac{dp_i}{dt}\ge 0$.
We next prove the second of Eqs.(\ref{67}). We use the relation
\begin{eqnarray}
\sum _{i=1}^n\frac{\partial \widehat {\mathfrak S}}{\partial p_i}\frac{dp_i}{dt}=\frac{d \widehat {\mathfrak S}}{d t},
\end{eqnarray}
and we get
\begin{eqnarray}
\sum _{i=1}^n\int _0^1\frac{\partial \widehat {\mathfrak S}}{\partial p_i}\frac{dp_i}{dt}dt=\widehat {\mathfrak S}(t=1)-\widehat {\mathfrak S}(t=0).
\end{eqnarray}
But $\widehat {\mathfrak S}(t=1)=\widehat {\mathfrak S}(1,...,1)=\mu(\Omega)$ and  $\widehat {\mathfrak S}(t=0)=\widehat {\mathfrak S}(0,...0)=0$.
This completes the proof.
\item[(2)]
Along the diagonal we use Eq.(\ref{a34}), the fact that $\frac{dp_i}{dt}=1$. This leads to the first of Eqs.(\ref{hhh}).

The second of Eqs.(\ref{hhh}) is a special case of Eq.(\ref{67}) which we have proved above, but we can also prove it directly using the integral 
\begin{eqnarray}\label{bbb}
 \int _0^1t^x(1-t)^ydt=B(x+1,y+1),
 \end{eqnarray}
and Eq.(\ref{LLL}).
\end{itemize}
\end{proof}
\begin{remark}
\mbox{}
\begin{itemize}

 \item[(1)]
  Each curve ${\cal C}$  leads to a different set of $\{M_\Omega (t|{\cal C})\}$.
 Therefore we get an infinite number of different ways of dividing the overlaps into the various sets (labelled by the curves ${\cal C}$).

  \item[(2)]
  The use of random sets leads to a continuous partition of the cardinality $\mu(\Omega)$ of the union of all sets with indices in $\Omega$, in Eq.(\ref{67}).
The term `continuous' refers to the fact that $t$ is a continuous variable and 
$\mu(\Omega)$ is expressed as a line integral of the  density of the average cardinality.
The term  `partition of the cardinality' refers to Eqs.(\ref{67}),(\ref{hhh}) and should not be confused with the term`partition of a set into subsets', where there is the requirement that the subsets should not overlap with each other.
 \item[(3)]

 Using the integral in Eq.(\ref{bbb})
we show that the Shapley cardinality $M_\Omega (i)$ in Eq.(\ref{b90}) can be written as the integral
\begin{eqnarray}\label{b900}
M_\Omega (i)=\sum _{a=1}^n\lambda _{a}(i)B(a,n-a+1)=\sum _{a=1}^n\lambda _{a}(i) \int _0^1dtt^{a-1}(1-t)^{n-a}=\int _0^1dt\frac{\partial \widehat {\mathfrak S}}{\partial p_i}({\cal D}) .
\end{eqnarray}
The same quantity $\lambda _{a}(i)t^{a-1}(1-t)^{n-a}$
appears in both $M_\Omega (i)$ and  $M_\Omega (t|{\cal D})$ (Eqs.(\ref{hhh}),(\ref{b900})).
But in  $M_\Omega (i)$ it is integrated over $t$ and it gives the Beta function, while in the $M_\Omega (t|{\cal D})$ it is summed over $i$. 
In section \ref{FFF} we had a clearly defined collection of sets labelled with the indices $i\in \Omega$, and we allocated to each of them the Shapley cardinality $M_\Omega (i)$.
In section \ref{GGG} we have a random collection of sets and we defined the $M_\Omega (t|{\cal D})$ in terms of the probability $t$.

\item[(4)]
The physical interpretation of the Beta function $B(a,n-a+1)$ as a probability in Eq.(\ref{b900}), has been discussed earlier in section \ref{sec20}.
We considered the probability that the set $S_i$ will join a union after the $a-1$ sets with indices in $A\setminus \{i\}$ and before the $n-a$
sets with indices in $\Omega \setminus A$ (where $a=|A|$).
We modify that argument by taking each of these sets with probability $t$.  This leads to the probability density $t^{a-1}(1-t)^{n-a}$
and its integral is the probability $B(a, n-a+1)$.

 \item[(5)]
 Relations analogous to Eqs.(\ref{67}), (\ref{hhh}) in the context of cooperative game theory are given in ref\cite{owen} for the case of a finite number of players, and in ref\cite{G5} for the case of a continuum of players.

 \end{itemize}

\end{remark}

\begin{table}
\caption{Let  $S_1=\{a,b\}$, $S_2=\{b,c\}$, $S_3=\{a,c,d\}$, $S_4=\{b,d\}$. The random set ${\mathfrak I}(\frac{1}{2}, \frac{1}{3}, \frac{1}{4},\frac{1}{5})$ (i.e. the 
$(A, {\mathfrak p}(A))$ for all $A\subseteq \Omega$), and the random set ${\mathfrak S}(\frac{1}{2}, \frac{1}{3}, \frac{1}{4},\frac{1}{5})$ (i.e. the 
$({\mathfrak S}(A), {\mathfrak p}(A))$ for all $A\subseteq \Omega$) are shown.
The probabilities $P(A)$, $P(\neg A)$ are also given.}
\def\arraystretch{2}
\begin{tabular}{|c|c|c|c|c|c|}\hline
$A$&${\mathfrak S}(A)$&$\mu(A)$&${\mathfrak p}(A)$&$P(A)$&$P(\neg A)$\\\hline
$\emptyset$&$\emptyset$&$0$&$\frac{1}{5}$&$1$&$1$\\\hline
$\{1\}$&$\{a,b\}$&$2$&$\frac{1}{5}$&$\frac{1}{2}$&$\frac{1}{2}$\\\hline
$\{2\}$&$\{b,c\}$&$2$&$\frac{1}{10}$&$\frac{1}{3}$&$\frac{2}{3}$\\\hline
$\{3\}$&$\{a,c,d\}$&$3$&$\frac{1}{15}$&$\frac{1}{4}$&$\frac{3}{4}$\\\hline
$\{4\}$&$\{b,d\}$&$2$&$\frac{1}{20}$&$\frac{1}{5}$&$\frac{4}{5}$\\\hline
$\{1,2\}$&$\{a,b,c\}$&$3$&$\frac{1}{10}$&$\frac{1}{6}$&$\frac{1}{3}$\\\hline
$\{1,3\}$&$\{a,b,c,d\}$&$4$&$\frac{1}{15}$&$\frac{1}{8}$&$\frac{3}{8}$\\\hline
$\{1,4\}$&$\{a,b,d\}$&$3$&$\frac{1}{20}$&$\frac{1}{10}$&$\frac{2}{5}$\\\hline
$\{2,3\}$&$\{a,b,c,d\}$&$4$&$\frac{1}{30}$&$\frac{1}{12}$&$\frac{1}{2}$\\\hline
$\{2,4\}$&$\{b,c,d\}$&$3$&$\frac{1}{40}$&$\frac{1}{15}$&$\frac{8}{15}$\\\hline
$\{3,4\}$&$\{a,b,c,d\}$&$4$&$\frac{1}{60}$&$\frac{1}{20}$&$\frac{3}{5}$\\\hline
$\{1,2,3\}$&$\{a,b,c,d\}$&$4$&$\frac{1}{30}$&$\frac{1}{24}$&$\frac{1}{4}$\\\hline
$\{1,2,4\}$&$\{a,b,c,d\}$&$4$&$\frac{1}{40}$&$\frac{1}{30}$&$\frac{4}{15}$\\\hline
$\{1,3,4\}$&$\{a,b,c,d\}$&$4$&$\frac{1}{60}$&$\frac{1}{40}$&$\frac{3}{10}$\\\hline
$\{2,3,4\}$&$\{a,b,c,d\}$&$4$&$\frac{1}{120}$&$\frac{1}{60}$&$\frac{2}{5}$\\\hline
$\{1,2,3,4\}$&$\{a,b,c,d\}$&$4$&$\frac{1}{120}$&$\frac{1}{120}$&$\frac{1}{5}$\\\hline
\end{tabular} \label{t1}
\end{table}

\subsection{Example}\label{ex}
Let
\begin{eqnarray}
S_1=\{a,b\};\;\;\;S_2=\{b,c\};\;\;\;S_3=\{a,c,d\};\;\;\;S_4=\{b,d\}.
\end{eqnarray}
The set of indices in this case is  $\Omega =\{1,2,3,4\}$.

The random set of indices ${\mathfrak I}(\frac{1}{2}, \frac{1}{3}, \frac{1}{4},\frac{1}{5}) $ is shown in table \ref{t1} (the $2^4=16$ subsets $A$ of $\Omega$ and the corresponding probabilities ${\mathfrak p}(A)$,
$P(A)$ and $P(\neg A)$). In this case $\widehat {\mathfrak I}(\frac{1}{2}, \frac{1}{3}, \frac{1}{4},\frac{1}{5}) =1.28$.

The random set ${\mathfrak S}(\frac{1}{2}, \frac{1}{3}, \frac{1}{4},\frac{1}{5})$ is also shown (the $2^4=16$ sets ${\mathfrak S}(A)$ and the corresponding probabilities).
In this case $\widehat {\mathfrak S}(\frac{1}{2}, \frac{1}{3}, \frac{1}{4},\frac{1}{5})=2.26$.

We next consider the diagonal line ${\cal D}$ and using Eq.(\ref{a34}) and table \ref{t1} we calculate the derivatives:
 \begin{eqnarray}
 &&\frac{\partial \widehat {\mathfrak S}}{\partial p_i}({\cal D})=\lambda_{1} (i)(1-t)^3+\lambda _{2}(i)t(1-t)^2+\lambda _{3}(i)t^2(1-t)\nonumber\\
&&\lambda _{1}(1)=\lambda _{1}(2)=\lambda _{1}(4)=2;\;\;\;\lambda _{2}(1)=\lambda _{2}(2)=\lambda _{2}(4)=3;\;\;\;\lambda_{3}(1)=\lambda_{3}(2)=\lambda_{3}(4)=1\nonumber\\
&& \lambda_{1}(3)=3;\;\;\;\lambda _{2}(3)=6;\;\;\;\lambda_{3}(3)=3.
 \end{eqnarray}
Also $\lambda _{4}(i)=0$. 

In this example
 \begin{eqnarray}
r_1=\frac{9}{4};\;\;\;r_2=\frac{5}{4};\;\;\;r_3=\frac{1}{2};\;\;\;r_4=0,
 \end{eqnarray}
and
 \begin{eqnarray}
M_\Omega (t|{\cal D})=4(1-t)^3r_1+12t(1-t)^2r_2+12t^2(1-t)r_3+4t^3r_4.
\end{eqnarray}
Also $\mu(\Omega)=4$.
Therefore 
 \begin{eqnarray}
\mu(\Omega)=\int_0^1dt M_\Omega (t|{\cal D})=r_1+r_2+r_3+r_4.
\end{eqnarray}

\section{Continuous resolutions of the identity in a finite-dimensional Hilbert space}

\subsection{Random projectors to the $2^n$ subspaces spanned by a total set of $n$ vectors}
In the $d$-dimensional Hilbert space $H(d)$ we consider the projectors $\Pi(A)$ to the $2^n$ subspaces spanned by the $n$ vectors in the
 pre-basis $\Sigma$  (Eq.(\ref{sig})).
\begin{definition}\label{def200}
A random projector $\varpi (p_1,...,p_n)$ in the Hilbert space $H(d)$, 
is the $2^n$ projectors $\Pi(A)$ (for all subsets $A$ of $\Omega$), with the probabilities ${\mathfrak p}(A)$ in Eq.(\ref{27}) attached to each of them:
\begin{eqnarray}
\varpi (p_1,...,p_n)=\{(\Pi(A),{\mathfrak p}(A))\;|\;A\subseteq \Omega\});\;\;\;\sum _{A\subseteq \Omega}{\mathfrak p}(A)=1.
\end{eqnarray}
\end{definition}
We note that if $|A|\ge d$ then $\Pi(A)={\bf 1}$.
The following proposition is analogous to proposition \ref{PRO2} in the context of set theory. 
\begin{proposition}\label{PRO11}
\begin{itemize}
\item[(1)]
The average of the random projector is the positive semi-definite operator (which in general is not a projector)
\begin{eqnarray}\label{RT1}
\widehat \varpi (p_1,...,p_n)=\sum _A {\mathfrak p}(A)\Pi (A)=\sum _A P(A){\mathfrak D} (A).
\end{eqnarray}
\item[(2)]
In the special case that $p_i=1$ when $i\in A$ and $p_i=0$ when $ i\notin A$, we get $\widehat \varpi (p_1,...,p_n) =\Pi (A)$.
In particular, if this set $A$ has cardinality $|A|\ge d$ then  $\widehat \varpi (p_1,...,p_n) ={\bf 1}$.
Also
\begin{eqnarray}
\widehat \varpi(1,...,1)={\bf 1};\;\;\;\widehat \varpi (0,...,0)={\bf 0}.
\end{eqnarray}
\item[(3)]
In the special case that the pre-basis $\Sigma$ is an orthonormal basis
\begin{eqnarray}\label{er}
\widehat \varpi(p_1,...,p_d) =\sum _{i=1}^d\Pi (\{i\})p_i.
\end{eqnarray}
\item[(4)]
In the special case that $p_1=...=p_n=p$, we get
\begin{eqnarray}\label{fe1}
\widehat {\varpi}(p,...,p) =\sum _{a=1}^n\left (\sum _{|A|=a}\Pi (A)\right ) p^a(1-p)^{n-a}.
\end{eqnarray}
\end{itemize}
\end{proposition}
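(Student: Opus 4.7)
The plan is to mirror, step by step, the proof of proposition \ref{PRO2} in the Hilbert space setting. Each ingredient in that argument has a direct analogue here: the generalised M\"obius-like relation (\ref{X}) for sets is replaced by its operator counterpart (\ref{XX}); the marginal identity (\ref{300}) carries over unchanged since it refers only to the probabilities $p_i$; and the special-value identities (\ref{AB1})--(\ref{AB4}) of ${\mathfrak p}(A)$ are used as before.

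For (1), the first equality in (\ref{RT1}) is the definition of $\widehat\varpi$. The second equality follows by applying (\ref{XX}) with $q(A)={\mathfrak p}(A)$, so that via (\ref{456}) the corresponding $Q(A)$ equals $P(A)$. Positive semi-definiteness is an immediate consequence of the first form: every $\Pi(A)$ is Hermitian positive semi-definite and every ${\mathfrak p}(A)\in[0,1]$ is non-negative, so $\widehat\varpi$ is a conic combination of positive semi-definite operators. For (2), I substitute the special choices of $p_i$ from (\ref{AB1}) into the first form of (\ref{RT1}); only the single term indexed by $A$ survives, giving $\widehat\varpi=\Pi(A)$. The boundary values then follow from $\Pi(\Omega)=\mathbf{1}$ (since $|\Omega|=n\ge d$ and $\Sigma$ is total, hence $H(\Omega)=H(d)$) and $\Pi(\emptyset)=\mathbf{0}$.

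For (3), orthonormality of the pre-basis forces $n=d$, and Eq.(\ref{377}) becomes the equality $\Pi(A)=\sum_{i\in A}\Pi(\{i\})$. Substituting this into the first form of (\ref{RT1}), interchanging the order of summation, and applying (\ref{300}) yields (\ref{er}). An equivalent route is to use the second form of (\ref{RT1}) together with the fact that in the orthonormal case ${\mathfrak D}(A)=0$ whenever $|A|\ge 2$, so that only the singleton terms $P(\{i\}){\mathfrak D}(\{i\})=p_i\Pi(\{i\})$ contribute. For (4), substituting (\ref{AB4}) into the first form of (\ref{RT1}) and grouping the summands by $a=|A|$ gives (\ref{fe1}) immediately.

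I do not anticipate any substantial obstacle, since this proposition is essentially a direct Hilbert-space translation of proposition \ref{PRO2}. The only points requiring some care are keeping track of the regime $|A|\ge d\Rightarrow\Pi(A)=\mathbf{1}$ coming from the pre-basis axioms (which is why part (2) collapses to $\mathbf{1}$ whenever $|A|\ge d$), and noting in part (3) that an orthonormal pre-basis necessarily has $n=d$, so the formula is indexed over $i=1,\ldots,d$.
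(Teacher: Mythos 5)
Your proposal is correct and follows essentially the same route as the paper, which simply declares the proof analogous to that of proposition \ref{PRO2}: the operator relation (\ref{XX}) with $q(A)={\mathfrak p}(A)$, $Q(A)=P(A)$ gives (\ref{RT1}), and the special values (\ref{AB1})--(\ref{AB4}) of ${\mathfrak p}(A)$ give parts (2)--(4), with the orthonormal basis playing the role of non-overlapping sets. Your added details (the conic-combination argument for positive semi-definiteness, the observation that orthonormality forces $n=d$, and that $|A|\ge d$ gives $\Pi(A)={\bf 1}$) are correct elaborations of what the paper leaves implicit.
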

\begin{proof}
The proof is analogous to the proof of proposition \ref{PRO2}. 
\begin{itemize}
\item[(1)]
The proof of Eq.(\ref{RT1}) is analogous to the proof of Eq.(\ref{a45}).
\item[(2)]
The proof is analogous to the proof of the second statement of proposition \ref{PRO2}. 
\item[(3)]
The proof is analogous to the proof of the third statement of proposition \ref{PRO2}. 
The non-overlapping sets there, correspond to the orthonormal basis here.
\item[(4)]
The proof of Eq.(\ref{fe1}) is analogous to the proof of Eq.(\ref{fe}).
\end{itemize}
\end{proof}

\subsection{An infinite number of continuous resolutions of the identity}

\begin{definition}
The density of the average $\widehat \varpi (p_1,...,p_n)$ of the random projector along a curve
${\cal C}=(p_1(t),...,p_n(t))$ in the $n$-dimensional probability space, is
\begin{eqnarray}
\frac{d\widehat \varpi }{dt}=\sum _i\frac{\partial \widehat \varpi }{\partial p_i}\frac{dp_i}{dt}.
\end{eqnarray}
\end{definition}

\begin{lemma}
$\frac{\partial \widehat {\varpi}}{\partial p_i}$ is a Hermitian positive semi-definite operator, given by
 \begin{eqnarray}\label{11AA}
 \frac{\partial \widehat {\varpi}}{\partial p_i}=
 \sum _{A\ni i}\left [\Pi(A)-\Pi (A\setminus \{i\})\right ]\frac{{\mathfrak p}(A)}{p_i}.
 \end{eqnarray}
 In the special case $p_1=...=p_n=p$ this reduces to
 \begin{eqnarray}\label{a34a}
  \frac{\partial \widehat {\varpi}}{\partial p_i}&=&
 \sum _{A\ni i} [\Pi(A)-\Pi (A\setminus \{i\})]p^{a-1}(1-p)^{n-a}=\sum _{a=1}^n\Lambda _{a}(i)p^{a-1}(1-p)^{n-a}
 \end{eqnarray}
$\Lambda _{a}(i)$ are $nd$ Hermitian positive semi-definite $d\times d$ matrices, given in Eq.(\ref{rrr}).
\end{lemma}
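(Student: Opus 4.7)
The plan is to mirror the proof of Lemma \ref{LE1} almost verbatim, replacing the scalar cardinality $\mu(A)$ by the operator $\Pi(A)$. The starting point is Eq.(\ref{RT1}), namely $\widehat\varpi(p_1,\ldots,p_n)=\sum_A {\mathfrak p}(A)\Pi(A)$, whose only $p_i$-dependent factors are the probabilities ${\mathfrak p}(A)$; the projectors $\Pi(A)$ do not depend on the $p_j$'s. So I would differentiate term by term and apply Eq.(\ref{a46}), which gives $+{\mathfrak p}(A)/p_i$ when $i\in A$ and $-{\mathfrak p}(A)/(1-p_i)$ when $i\notin A$. This yields the same split structure as Eq.(\ref{111}) but with $\Pi(A)$ in place of $\mu(A)$.

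Next I would use the bijection $A\leftrightarrow A\setminus\{i\}$ between subsets of $\Omega$ containing $i$ and subsets not containing $i$, together with the identity ${\mathfrak p}(A)/p_i={\mathfrak p}(A\setminus\{i\})/(1-p_i)$ (valid for $A\ni i$), in order to pull the two sums into a single sum indexed by $A\ni i$. This produces the combined expression $\sum_{A\ni i}[\Pi(A)-\Pi(A\setminus\{i\})]\,{\mathfrak p}(A)/p_i$, which is exactly Eq.(\ref{11AA}).

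The one genuinely new ingredient, compared with Lemma \ref{LE1}, is establishing Hermiticity and positive semi-definiteness. I would argue as follows: since $H(A\setminus\{i\})\subseteq H(A)$, the projectors $\Pi(A)$ and $\Pi(A\setminus\{i\})$ commute and satisfy $\Pi(A)\Pi(A\setminus\{i\})=\Pi(A\setminus\{i\})\Pi(A)=\Pi(A\setminus\{i\})$. A direct computation then gives $[\Pi(A)-\Pi(A\setminus\{i\})]^2=\Pi(A)-\Pi(A\setminus\{i\})$, so each summand is itself an orthogonal projector (onto the orthogonal complement of $H(A\setminus\{i\})$ inside $H(A)$), hence Hermitian positive semi-definite. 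Since ${\mathfrak p}(A)/p_i\ge 0$, the sum in Eq.(\ref{11AA}) is a non-negative combination of positive semi-definite operators, and the conclusion follows. I expect this is the only step where the operator-valued setting requires something beyond a direct translation from the set-theoretic lemma.

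Finally, for the special case $p_1=\cdots=p_n=p$ I would substitute the first line of Eq.(\ref{478}) for ${\mathfrak p}(A)/p_i$, obtaining the factor $p^{a-1}(1-p)^{n-a}$ with $a=|A|$. Grouping the summands by cardinality $a=|A|$ and recognising $\Lambda_a(i)=\sum_{|A|=a,\,A\ni i}[\Pi(A)-\Pi(A\setminus\{i\})]$ from Eq.(\ref{rrr}) yields the displayed expansion in Eq.(\ref{a34a}). The positive semi-definiteness of each $\Lambda_a(i)$ follows from the same projector-difference argument above, and the truncation of the $a$-sum to $a\le d$ (noted after Eq.(\ref{rrr})) is automatic since $\Lambda_a(i)=0$ whenever $a>d$.
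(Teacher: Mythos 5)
Your proposal is correct and follows essentially the same route as the paper: differentiate $\widehat\varpi=\sum_A{\mathfrak p}(A)\Pi(A)$ exactly as in Lemma \ref{LE1} with $\mu(A)$ replaced by $\Pi(A)$, then note that each $\Pi(A)-\Pi(A\setminus\{i\})$ is a projector appearing with a non-negative coefficient, which gives positive semi-definiteness. Your explicit verification that the difference is a projector (via $H(A\setminus\{i\})\subseteq H(A)$) and the cardinality-grouping for the equal-probability case simply fill in details the paper leaves implicit.
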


\begin{proof}
The proof of Eq.(\ref{11AA}) is analogous to the proof in Lemma \ref{LE1}, with the cardinality $\mu(A)$ replaced here by the projector $\Pi(A)$.
We note that the $\Pi(A)-\Pi (A\setminus \{i\})$ are projectors, and that in Eq.(\ref{11AA}) we have a finite sum of projectors with positive coefficients.
Any such sum is a positive semi-definite operator, and this proves that the  $\frac{\partial \widehat {\varpi}}{\partial p_i}$ are positive semi-definite operators. 
\end{proof}

\begin{remark}
The $\Pi(A)$ project into the subspaces spanned by vectors in the pre-basis.
In Eq.(\ref{a34a}) for small values of the probability $p$, the projectors $\Pi(A)$ with small sets $A$ (small $a$) contribute most.
For large values of the probability $p$, the projectors $\Pi(A)$ with large sets $A$ (large $a$) contribute most.
\end{remark}

The following proposition introduces an infinite number of continuous resolutions of the identity, for a given pre-basis.
Each of them involves an one-dimensional continuum of Hermitian positive semi-definite operators which with appropriate normalization can become density matrices.
\begin{proposition}\label{PRO100}
Let ${\cal C}$ be a curve in the probability hypercube $[0,1]^n$  as in Eq.(\ref{curve}) .
We use the notation
\begin{eqnarray}
\frac{\partial \widehat \varpi}{\partial p_i}({\cal C})=\frac{\partial \widehat \varpi }{\partial p_i}[p_1(t),...,p_n(t)]
\end{eqnarray}
The  operators
\begin{eqnarray}\label{670}
\tau _\Omega (t|{\cal C})=\sum _{i=1}^n\frac{\partial \widehat \varpi}{\partial p_i}({\cal C})\frac{dp_i}{dt};\;\;\;
 \int _0^1dt \tau _\Omega (t|{\cal C})=\bf 1.
\end{eqnarray}
are Hermitian positive semi-definite and resolve the identity.
\end{proposition}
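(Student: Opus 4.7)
The plan is to mirror the argument of proposition \ref{PRO10}, replacing the scalar quantity $\widehat{\mathfrak S}$ by the operator-valued quantity $\widehat \varpi$, and to invoke the two results just proved in propositions \ref{PRO11} and the preceding lemma.

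First I would verify Hermiticity and positive semi-definiteness. By the lemma immediately preceding this proposition, each operator $\frac{\partial \widehat\varpi}{\partial p_i}(\mathcal C)$ is Hermitian positive semi-definite (it is a non-negative linear combination of the projectors $\Pi(A)-\Pi(A\setminus\{i\})$ evaluated along the curve). The monotonicity condition $\frac{dp_i}{dt}\ge 0$ in the definition of the curve $\mathcal C$ in Eq.(\ref{curve}) guarantees that the coefficients multiplying these operators in Eq.(\ref{670}) are non-negative scalars. A non-negative scalar combination of Hermitian positive semi-definite operators is again Hermitian positive semi-definite, so $\tau_\Omega(t|\mathcal C)$ has the required operator-theoretic properties for every $t\in[0,1]$.

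Second, I would establish the resolution of the identity using the chain rule along the curve. By construction,
\begin{equation}
\tau_\Omega(t|\mathcal C)=\sum_{i=1}^n\frac{\partial \widehat\varpi}{\partial p_i}[p_1(t),\dots,p_n(t)]\,\frac{dp_i}{dt}=\frac{d}{dt}\,\widehat\varpi[p_1(t),\dots,p_n(t)].
\end{equation}
Integrating componentwise from $t=0$ to $t=1$ gives
\begin{equation}
\int_0^1 dt\,\tau_\Omega(t|\mathcal C)=\widehat\varpi[p_1(1),\dots,p_n(1)]-\widehat\varpi[p_1(0),\dots,p_n(0)]=\widehat\varpi(1,\dots,1)-\widehat\varpi(0,\dots,0).
\end{equation}
The endpoints are fixed by Eq.(\ref{curve}), and by part (2) of proposition \ref{PRO11} these evaluate to $\mathbf 1$ and $\mathbf 0$ respectively, which completes the argument.

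There is no real obstacle: the heavy lifting has already been done in establishing that $\frac{\partial \widehat\varpi}{\partial p_i}$ is Hermitian positive semi-definite (lemma) and that $\widehat\varpi(1,\dots,1)=\mathbf 1$, $\widehat\varpi(0,\dots,0)=\mathbf 0$ (proposition \ref{PRO11}). The only point worth highlighting is that the argument requires both the monotonicity $\frac{dp_i}{dt}\ge 0$ (to preserve positivity pointwise in $t$) and the fixed endpoints $p_i(0)=0$, $p_i(1)=1$ (to obtain $\mathbf 1$ upon integration); these are precisely the two ingredients encoded into the definition of the admissible curves $\mathcal C$ in Eq.(\ref{curve}), so the proof reduces to observing that everything fits together exactly as in the set-theoretic counterpart of proposition \ref{PRO10}.
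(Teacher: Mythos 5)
Your proposal is correct and follows essentially the same route as the paper: the paper proves this by declaring it analogous to the first part of proposition \ref{PRO10}, i.e.\ positivity from the preceding lemma together with $\frac{dp_i}{dt}\ge 0$, and the resolution of the identity via the chain rule and the endpoint values $\widehat\varpi(1,\dots,1)={\bf 1}$, $\widehat\varpi(0,\dots,0)={\bf 0}$ from proposition \ref{PRO11}. You have simply written out explicitly the steps the paper leaves implicit in that analogy.
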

\begin{proof}
The proof of this is analogous to the first part of the proposition \ref{PRO10}.
The $\frac{\partial \widehat \varpi}{\partial p_i}$ are positive semi-definite operators, and since $\frac{dp_i}{dt}\ge 0$ (Eq.(\ref{curve})), the $\tau _\Omega (t;{\cal C})$ are also positive semi-definite operators.
\end{proof}

Each curve ${\cal C}$ in Eq.(\ref{curve})
leads to a different set of $\{\tau_\Omega (t|{\cal C})\}$.
 Therefore we get an infinite number of resolutions of the identity.
All of them are continuous resolutions of the identity in the sense that  the unity
is resolved into the one-dimensional continuum of operators $\tau _\Omega (t|{\cal C})$ that depend on the continuous real variable $t$.
The discrete resolution of the identity in Eq.(\ref{ppp}) in terms of $n$ matrices, is replaced here by a continuous resolution of the identity.

\subsection{The diagonal continuous resolution of the identity}
For later use we introduce the following polynomial, which is a truncated binomial expansion:
\begin{eqnarray}\label{108}
&&{\mathfrak P}_{n,k}(t)=\sum _{j=0}^k\begin{pmatrix}
n\\
j\\
\end{pmatrix}t^j(1-t)^{n-j};\;\;\;k\le n;\;\;\;0\le t\le1.
 \end{eqnarray}
Some mathematical properties of these polynomials have been studied in \cite{number}.
Using the integral in Eq.(\ref{bbb}), we prove that
\begin{eqnarray}
\int_0^1dt {\mathfrak P}_{n,k}(t)=\frac{k+1}{n+1}.
\end{eqnarray}
Examples of these polynomials are:
\begin{eqnarray}
{\mathfrak P}_{n,0}(t)=(1-t)^n;\;\;\;{\mathfrak P}_{n,n-1}(t)=1-t^n;\;\;\;{\mathfrak P}_{n,n}(t)=1.
 \end{eqnarray}

\begin{definition}
\begin{eqnarray}\label{333}
{R_a}=B(a,n-a+1)\sum _{i=1}^n\Lambda_a(i)=B(a,n-a+1)\sum _{(A,A_-)}[\Pi(A)-\Pi(A_-)];\;\;\;a=1,...,d,
\end{eqnarray}
where the 
last summation is over all pairs of sets $(A, A_-)$ with $|A|=a$, $A_-\subset A$ and $|A_-|=a-1$.

\end{definition}
The $\Pi(A)-\Pi(A_-)$ are projectors which give the `increment' as we go from the $(a-1)$-dimensional subspace $H(A_-)$ to the  $a$-dimensional space $H(A)$.
These spaces are formed from vectors in the pre-basis.
Therefore the $R_a$ is an `average increment' as we go from an $(a-1)$-dimensional subspace to an  $a$-dimensional space.
For example, 
\begin{eqnarray}
{R_1}&=&\frac{1}{n}\sum _{i=1}^n\Pi(\{i\});\;\;\;{R_2}=\frac{2}{n(n-1)}\sum _{\{i,j\}}\{[\Pi(\{i,j\})-\Pi(\{i\})]+[\Pi(\{i,j\})-\Pi(\{j\})]\}\nonumber\\
R_3&=&\frac{6}{n(n-1)(n-2)}\sum _{\{i,j,k\}}\{[\Pi(\{i,j,k\})-\Pi(\{i,j\})]+[\Pi(\{i,j,k\})-\Pi(\{i,k\})]\nonumber\\&+&[\Pi(\{i,j,k\})-\Pi(\{j,k\})]\}
\end{eqnarray}
The second sum is over all $\frac{n(n-1)}{2}$ pairs of indices in a set of $n$ indices. 
The third sum is over all $\frac{n(n-1)(n-2)}{6}$ triplets of indices in a set of $n$ indices.

The $R_a$ is  not a projector, but the following lemma shows that it is a density matrix.
Eq.(\ref{reso23}) below provides a partition of unity in terms of them.

\begin{lemma}
The $R_a$ are $d$ density matrices (which for a given pre-basis are fixed) and 
\begin{eqnarray}\label{reso23}
\sum _{a=1}^d{R_a}={\bf 1}.
\end{eqnarray}
\end{lemma}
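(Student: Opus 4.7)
The plan is to check four properties in sequence: Hermiticity of each $R_a$, positive semi-definiteness, unit trace, and the resolution $\sum_a R_a = \mathbf{1}$. Hermiticity is immediate because every $\Pi(A)-\Pi(A\setminus\{i\})$ is a difference of Hermitian projectors, and $R_a$ is a real linear combination of such differences.

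For positive semi-definiteness, the key observation is that each increment $\Pi(A)-\Pi(A\setminus\{i\})$ is itself a projector, not merely a difference of projectors. Indeed, since $H(A\setminus\{i\})\subseteq H(A)$, the operator $\Pi(A)-\Pi(A\setminus\{i\})$ projects onto the orthogonal complement of $H(A\setminus\{i\})$ inside $H(A)$ (a subspace of dimension $0$ or $1$ depending on whether $\ket{i}$ already lies in $H(A\setminus\{i\})$). Hence each such increment is positive semi-definite, each $\Lambda_a(i)$ is a sum of positive semi-definite operators, and $R_a$ is a positive multiple of such a sum.

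For the trace, I would just plug in the trace formula given in Eq.(\ref{rrr}):
\begin{equation*}
\mathrm{Tr}(R_a)=B(a,n-a+1)\sum_{i=1}^n\mathrm{Tr}[\Lambda_a(i)]=n\binom{n-1}{a-1}\frac{(a-1)!(n-a)!}{n!}=1,
\end{equation*}
which, combined with the previous two properties, makes each $R_a$ a density matrix.

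The resolution identity $\sum_{a=1}^d R_a=\mathbf{1}$ is then the matrix analogue of Eq.(\ref{LLL}), and the cleanest way to derive it is to apply Proposition \ref{PRO100} to the diagonal curve ${\cal D}$. Along ${\cal D}$ we have $dp_i/dt=1$, so combining the definition of $\tau_\Omega(t|{\cal D})$ with Eq.(\ref{a34a}) gives
\begin{equation*}
\tau_\Omega(t|{\cal D})=\sum_{a=1}^n\Bigl(\sum_{i=1}^n\Lambda_a(i)\Bigr)\,t^{a-1}(1-t)^{n-a}.
\end{equation*}
Integrating from $0$ to $1$, using $\int_0^1 t^{a-1}(1-t)^{n-a}\,dt=B(a,n-a+1)$ and the identity $\int_0^1\tau_\Omega(t|{\cal D})\,dt=\mathbf{1}$ from Eq.(\ref{670}), yields $\sum_{a=1}^n R_a=\mathbf{1}$. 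The sum can be truncated at $a=d$ because $\Lambda_a(i)=0$ for $a>d$ (as noted after Eq.(\ref{rrr}), since any $d+1$ vectors in $H(d)$ are linearly dependent). Alternatively one can reach the same conclusion by rearranging $\sum_a R_a=\sum_i\sum_a B(a,n-a+1)\Lambda_a(i)=\sum_i\frac{d}{n}\sigma_\Omega(i)=\mathbf{1}$ via Eqs.(\ref{ppp})--(\ref{rrr}). The only mild subtlety is confirming that the increments $\Pi(A)-\Pi(A\setminus\{i\})$ are projectors rather than arbitrary Hermitian operators; the rest is bookkeeping with binomial coefficients and Beta functions.
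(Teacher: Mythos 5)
Your proof is correct, and it rests on the same ingredients as the paper's: Hermiticity and positive semi-definiteness of $R_a$ inherited from the $\Lambda_a(i)$ (your explicit observation that each increment $\Pi(A)-\Pi(A\setminus\{i\})$ is itself a projector onto the orthogonal complement of $H(A\setminus\{i\})$ inside $H(A)$ is exactly the justification the paper invokes elsewhere, e.g.\ in the proof of the lemma containing Eq.(\ref{11AA})), and the trace computation $\mathrm{Tr}(R_a)=B(a,n-a+1)\,n\binom{n-1}{a-1}=1$, which is precisely the paper's Eq.(\ref{pppp}). The one genuine difference is your primary route to $\sum_{a=1}^d R_a={\bf 1}$: you integrate the diagonal operator $\tau_\Omega(t|{\cal D})$ and invoke Proposition \ref{PRO100} together with Eq.(\ref{a34a}) and the Beta integral, whereas the paper obtains Eq.(\ref{reso23}) directly from the discrete resolution of the identity, i.e.\ from Eqs.(\ref{ppp}) and (\ref{rrr}) — which is exactly the rearrangement you list as your ``alternative''. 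Your route is non-circular, since Proposition \ref{PRO100} is proved independently of this lemma (from the positivity of $\partial\widehat\varpi/\partial p_i$ and the boundary values $\widehat\varpi(1,\dots,1)={\bf 1}$, $\widehat\varpi(0,\dots,0)={\bf 0}$); note, however, that the paper runs the implication in the opposite direction in Proposition \ref{PRO200}, where the present lemma is used to give a \emph{direct} proof of the diagonal resolution Eq.(\ref{reso10}). So your version trades that direct proof for a dependence on the general continuous resolution, while the paper's version keeps the lemma self-contained within the discrete formalism of Eqs.(\ref{ppp})--(\ref{rrr}); both are valid, and your truncation of the sum at $a=d$ (because $\Pi(A)={\bf 1}$ once $|A|\ge d$, so $\Lambda_a(i)=0$ for $a\ge d+1$) is the same remark the paper makes after Eq.(\ref{rrr}).
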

\begin{proof}
The $\Lambda_a(i)$ are Hermitian positive semi-definite matrices and therefore the $R_a$ are also Hermitian positive semi-definite matrices.
Using Eq.(\ref{rrr}) we prove that 
\begin{eqnarray}\label{pppp}
{\rm Tr} \left [\sum _{i=1}^n{\Lambda} _{a}(i)\right ]=\frac{1}{B(a,n-a+1)}.
\end{eqnarray}
Therefore the $R_a$ have trace $1$, and they are density matrices. Eq.(\ref{reso23}) follows from Eqs.(\ref{ppp}), (\ref{rrr}).
\end{proof}

We note that summation of the $B(a,n-a+1)\Lambda_a(i)$ over $i$ gives the $d$ density matrices $R_a$ in Eq.(\ref{333}), while summation over $a$ gives the $n$ density matrices $\sigma _\Omega (i)$ in Eq.(\ref{ppp}).

\begin{proposition}\label{PRO200}
Let ${\cal D}$ be the diagonal in the probability hypercube $[0,1]^n$  as in Eq.(\ref{diag}). 
\begin{itemize}
\item[(1)]
The operators
\begin{eqnarray}\label{678}
\tau _\Omega (t|{\cal D})=\sum _{i=1}^n\frac{\partial \widehat \varpi}{\partial p_i}({\cal D}) =\sum _{a=1}^d\sum _{i=1}^n\Lambda_a(i)t^{a-1}(1-t)^{n-a}
=\sum _{a=1}^d\frac{R_a}{B(a,n-a+1)}
t^{a-1}(1-t)^{n-a}
\end{eqnarray}
are Hermitian positive semi-definite and resolve the identity.
\begin{eqnarray}\label{reso10}
\int _0^1dt \tau _\Omega (t|{\cal D})={\bf 1}.
\end{eqnarray}

\item[(2)]
The trace of  $\tau _\Omega (t|{\cal D})$ is:
\begin{eqnarray}
{\rm Tr}[\tau _\Omega (t|{\cal D})]=n{\mathfrak P}_{n-1,d-1}(t).
\end{eqnarray}
\end{itemize}
\end{proposition}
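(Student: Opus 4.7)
The plan is to establish the three equivalent expressions for $\tau_\Omega(t|\mathcal{D})$ in part (1) by direct substitution into the general definition of proposition \ref{PRO100}, and then to compute the trace in part (2) by taking the trace term-by-term in the $R_a$-expansion and recognizing the result as a truncated binomial expansion.

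For part (1), I would start from the first equality $\tau_\Omega(t|\mathcal{D}) = \sum_{i=1}^n \frac{\partial \widehat{\varpi}}{\partial p_i}(\mathcal{D})$: this is immediate from Eq.(\ref{670}) because along the diagonal $p_i(t)=t$, so $\frac{dp_i}{dt}=1$. For the second equality, I would substitute Eq.(\ref{a34a}) (evaluated at $p=t$) into this sum, swap the order of summation, and note that $\Lambda_a(i)=0$ for $a\ge d+1$ so the summation over $a$ truncates at $d$. For the third equality, I would invoke the definition of $R_a$ in Eq.(\ref{333}), which gives $\sum_{i=1}^n \Lambda_a(i) = R_a/B(a,n-a+1)$. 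The Hermitian positive semi-definite property and the resolution of the identity in Eq.(\ref{reso10}) both follow from proposition \ref{PRO100} applied to the diagonal curve ${\cal D}$; alternatively, the identity resolution admits a direct verification using $\int_0^1 t^{a-1}(1-t)^{n-a}dt = B(a,n-a+1)$ from Eq.(\ref{bbb}) together with $\sum_{a=1}^d R_a = \mathbf{1}$ from Eq.(\ref{reso23}).

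For part (2), I would take the trace of the third expression in Eq.(\ref{678}):
\begin{eqnarray}
{\rm Tr}[\tau_\Omega(t|\mathcal{D})] = \sum_{a=1}^d \frac{{\rm Tr}(R_a)}{B(a,n-a+1)} t^{a-1}(1-t)^{n-a}.
\end{eqnarray}
Since each $R_a$ is a density matrix, ${\rm Tr}(R_a)=1$. The key arithmetic step is the identity
\begin{eqnarray}
\frac{1}{B(a,n-a+1)} = \frac{n!}{(a-1)!(n-a)!} = n\binom{n-1}{a-1},
\end{eqnarray}
which reduces the sum to $n\sum_{a=1}^d \binom{n-1}{a-1} t^{a-1}(1-t)^{n-a}$. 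Reindexing with $j=a-1$ gives $n\sum_{j=0}^{d-1}\binom{n-1}{j}t^j(1-t)^{(n-1)-j}$, which matches the definition of ${\mathfrak P}_{n-1,d-1}(t)$ in Eq.(\ref{108}).

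There is no serious obstacle here: both parts are essentially assembly of previously established ingredients (proposition \ref{PRO100}, the definition of $R_a$, and the fact that $R_a$ are density matrices with ${\rm Tr}(R_a)=1$). The only mild subtlety to watch is the truncation of the sum at $a=d$, which uses the vanishing $\Lambda_a(i)=0$ for $a>d$, and the careful reindexing in the trace computation so that the $(1-t)$ exponent in each term is $(n-1)-j$ rather than $n-j$.
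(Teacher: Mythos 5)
Your proof is correct and follows essentially the same route as the paper: part (1) as a corollary of proposition \ref{PRO100} with the diagonal substitution $\frac{dp_i}{dt}=1$, the $R_a$ definition in Eq.(\ref{333}), and a direct check of Eq.(\ref{reso10}) via the Beta integral and $\sum_a R_a={\bf 1}$; part (2) by taking the trace and recognizing ${\mathfrak P}_{n-1,d-1}(t)$. Your explicit handling of the truncation at $a=d$ and the binomial reindexing simply spells out arithmetic the paper leaves implicit.
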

\begin{proof}
\begin{itemize}
\item[(1)]
This is a corollary of proposition \ref{PRO100}. In Eq.(\ref{678}) we used Eq.(\ref{a34a}) and that
in the diagonal case $\frac{dp_i}{dt}=1$.
A direct proof of Eq.(\ref{reso10}) is 
\begin{eqnarray}
\int _0^1dt \tau _\Omega (t|{\cal D})=\sum _{a=1}^d\frac{R _a}{B(a,n-a+1)}\int _0^1dtt^{a-1}(1-t)^{n-a}=\sum _{a=1}^dR_a={\bf 1}.
\end{eqnarray}
\item[(2)]
Using Eq.(\ref{108}) we find that the trace of  $\tau _\Omega (t|{\cal D})$ is:
\begin{eqnarray}
{\rm Tr}[\tau _\Omega (t|{\cal D})]=\sum _{a=1}^d
\frac{t^{a-1}(1-t)^{n-a}}{B(a,n-a+1)}=n{\mathfrak P}_{n-1,d-1}(t).
\end{eqnarray}
\end{itemize}
\end{proof}

\subsection{The diagonal continuous resolutions of the identity versus the discrete resolutions of the identity}

The following proposition will be used to link the discrete resolution of the identity in Eq.(\ref{ppp}) with the continuous one in Eq.(\ref{reso10}). 
\begin{proposition}
The density matrices of Eq.(\ref{ppp}) can be written as integrals as follows:
\begin{eqnarray}\label{116}
\sigma _{\Omega}(i)=\frac{n}{d}\int _0^1dt \left [\sum _{a=1}^d{\Lambda _a}(i)t^{a-1}(1-t)^{n-a}\right ];\;\;\;\frac{d}{n}\sum _{i=1}^n \sigma _{\Omega}(i)={\bf 1}.
\end{eqnarray}
\end{proposition}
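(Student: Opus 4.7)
The plan is to prove the first equation by substituting the integral representation of the Beta function from Eq.(\ref{bbb}) into the existing expression for $\sigma_\Omega(i)$ given in Eq.(\ref{rrr}), and then to recover the identity resolution either by quoting Eq.(\ref{ppp}) directly or by combining the integral form with the diagonal resolution of proposition \ref{PRO200}.

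\emph{First step.} Starting from
\begin{eqnarray}
\sigma _{\Omega}(i)=\frac{n}{d}\sum _{a=1}^d\Lambda _{a}(i)\,B(a, n-a+1),
\end{eqnarray}
I would apply the Euler integral representation
\begin{eqnarray}
B(a,n-a+1)=\int_0^1 t^{a-1}(1-t)^{n-a}\,dt,
\end{eqnarray}
which is exactly Eq.(\ref{bbb}) with $x=a-1$, $y=n-a$. Since the sum over $a$ is finite and the integrand is continuous on $[0,1]$, I can interchange the finite sum and the integral to obtain
\begin{eqnarray}
\sigma _{\Omega}(i)=\frac{n}{d}\int_0^1 dt\left[\sum_{a=1}^d \Lambda_a(i)\,t^{a-1}(1-t)^{n-a}\right],
\end{eqnarray}
which is the first claim of Eq.(\ref{116}).

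\emph{Second step.} For the resolution of the identity $\frac{d}{n}\sum_{i=1}^n\sigma_\Omega(i)=\mathbf{1}$, I would note that this is simply a restatement of the previously established Eq.(\ref{ppp}), whose proof is in refs.~\cite{Vou1,Vou2}. Alternatively, one can give a self-contained re-derivation by summing the integral form just obtained over $i$, interchanging the sum and integral, and invoking the diagonal case of proposition \ref{PRO200}: indeed
\begin{eqnarray}
\frac{d}{n}\sum_{i=1}^n\sigma_\Omega(i)=\int_0^1 dt\left[\sum_{i=1}^n\sum_{a=1}^d\Lambda_a(i)\,t^{a-1}(1-t)^{n-a}\right]=\int_0^1 dt\,\tau_\Omega(t|{\cal D})=\mathbf{1},
\end{eqnarray}
where the middle equality uses the expression for $\tau_\Omega(t|{\cal D})$ in Eq.(\ref{678}) and the last equality is Eq.(\ref{reso10}).

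There is essentially no obstacle here; the proposition is a routine rewriting of an already-proven identity with the Beta function replaced by its integral form. The only subtlety worth flagging is the justification for exchanging the finite sum over $a$ with the integral, which is immediate since the integrand is a polynomial in $t$ of degree at most $n-1$ and each $\Lambda_a(i)$ is a fixed operator independent of $t$. The value of the statement lies not in its difficulty but in the bridge it provides between the discrete formalism of section 3 and the diagonal continuous resolution of proposition \ref{PRO200}.
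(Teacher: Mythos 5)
Your proposal is correct and follows essentially the same route as the paper: the paper's proof likewise invokes the Beta-function integral of Eq.(\ref{bbb}) to convert between $\sum_a \Lambda_a(i) B(a,n-a+1)$ in Eq.(\ref{rrr}) and the integral form, and then cites Eq.(\ref{ppp}) for the resolution of the identity. Your alternative re-derivation via proposition \ref{PRO200} is a harmless extra, but the core argument matches the paper's.
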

\begin{proof}
Using the integral in Eq.(\ref{bbb}), we get
\begin{eqnarray}
\int _0^1dt \left [\sum _{a=1}^d{\Lambda _a}(i)t^{a-1}(1-t)^{n-a}\right ]=
 \sum _{a=1}^n \Lambda_{a}(i)B(a, n-a+1)=\frac{d}{n}\sigma _\Omega (i).
 \end{eqnarray}
 Eq.(\ref{ppp}) has been used in the last step.
\end{proof}
It seen that the same quantity ${\Lambda _a}(i)t^{a-1}(1-t)^{n-a}$ appears in both $\tau _\Omega (t|{\cal D})$  (Eq.(\ref{678})) and $\sigma _{\Omega}(i)$ (Eq.(\ref{116})).
But the $\tau _\Omega (t|{\cal D})$ performs summation over $i$, while the $\sigma _{\Omega}(i)$ 
performs integration over $t$ and this gives the Beta function $B(a, n-a+1)$.
In section \ref{sec25} we interpreted the $B(a, n-a+1)$ as a probability and the $t^{a-1}(1-t)^{n-a}$ as a probability density, in the context of set theory.
The same interpretation applies here also.

\section{The ${\cal T}$-representation}

The variable $t$ has a nice interpretation as probability, but it is convinient to change variables from $t$ to $x$, where
\begin{eqnarray}
x=\frac{t}{1-t};\;\;\;t=\frac{x}{1+x};\;\;\;dt=\frac{dx}{(1+x)^2}
\end{eqnarray}
When $t$ takes values in $(0,1)$ the `probability related' variable $x$ takes values in the `probabilistic semi-axis' $[0,\infty)$. In particular $t=0.5$ corresponds to $x=1$.
Now the resolution of the identity in Eq.(\ref{reso10}) (for the diagonal case) can be rewritten as
\begin{eqnarray}\label{cd}
\int _0^\infty  dx{\cal T}(x)={\bf 1},
\end{eqnarray}
where
\begin{eqnarray}
&&{\cal T}(x)=\frac{1}{(1+x)^{n+1}}\sum _{a=1}^d x^{a-1}\frac{R_a}{B(a,n-a+1)}=\frac{1}{(1+x)^{n+1}}\sum _{a=1}^d x^{a-1}\left [\sum  _{(A, A_-)}\Pi(A)-\Pi(A_-)\right ]\nonumber\\
&&{\rm Tr}[{\cal T}(x)]=\frac{n}{(1+x)^2}{\mathfrak P}_{n-1,d-1}\left(\frac{x}{1+x}\right ).
\end{eqnarray}
The last summation is over all pairs of sets $(A, A_-)$ with $|A|=a$, $A_-\subset A$ and $|A_-|=a-1$.
We can prove Eq.(\ref{cd}) directly, using Eq.(\ref{reso23}) and the integral
\begin{eqnarray}\label{A3}
\int _0^{\infty} dx \frac{x^{a-1}}{(1+x)^{n+1}}=B(a,n-a+1).
\end{eqnarray}

We note that
\begin{itemize}
\item
Since $x$ takes positive values, ${\cal T}(x)$ is a $d\times d$ Hermitian positive semi-definite matrix, which is not a projector.
${\cal T}(x)$ is really the  $\tau _\Omega (t|{\cal D})$ in terms of a different variable.
Its elements are  ratios of polynomials of $x$ of order $d-1$ over $(1+x)^{n+1}$.
\item
${\cal T}(x)$ is related to the subspaces spanned by the vectors in the pre-basis.
For small (large) values of the probability related variable $x$, it is mainly related to the subspaces with small (large) dimension.

\end{itemize}

\subsection{Representation of states in terms of a continuum of components}
The resolution of the identity in Eq.(\ref{cd}) can be used to partition an arbitrary state $\ket{s}$ 
in the finite-dimensional Hilbert space $H(d)$,
in terms of a continuum of components ($x\in [0,\infty )$) which are not normalised, are  not orthogonal and are not independent from each other:
\begin{eqnarray}\label{cdb}
\ket{s}=\int _0^{\infty}dx\ket{s(x)};\;\;\;\ket{s(x)}={\cal T}(x) \ket{s}=\frac{1}{(1+x)^{n+1}}\sum _{a=1}^d x^{a-1}\frac{R_a\ket{s}}{B(a,n-a+1)}.
\end{eqnarray}
Each $\ket{s(x)}$ is a $d\times 1$ column whose components are ratios of polynomials of $x$ of order $d-1$ over $(1+x)^{n+1}$. 
The coefficient of $\frac{x^{a-1}}{(1+x)^{n+1}}$ is associated with the matrix $R_a$ which as we explained earlier gives the average increment as we go from 
an $(a-1)$-dimensional subspace to an $a$-dimensional space (in the spaces spanned by the vectors in the pre-basis). Also
\begin{eqnarray}
\ket{s(0)}= \frac{1}{n}R_1\ket{s}= \frac{1}{n^2}\sum _{i=1}^n\Pi(i)\ket{s};\;\;\;\ket{s(x)}\sim \frac{1}{x^{n-d+2}}\frac{R_d\ket{s}}{B(d,n-d+1)}\;\;{\rm as}\;\;x\rightarrow \infty.
\end{eqnarray}
Similar comment applies to  $\bra{s(x)}$ which is $1\times d$ row.

The scalar product of the bra state $\bra{u}$ with the ket state $\ket{s}$ is written as the double integral:
\begin{eqnarray}
\int _0^{\infty}\int _0^{\infty}dx_1dx_2\langle{u(x_1)}\ket{s(x_2)}=\langle u\ket{s}.
\end{eqnarray}

\subsection{Representation of operators}

An operator $\Theta$ is represented as follows: 
\begin{eqnarray}\label{xcv}
{\cal O}(x_1,x_2; \Theta)= {\cal T}(x_1)\Theta {\cal T}(x_2);\;\;\;\int _0^\infty dx_1\int _0^\infty dx_2{\cal O}(x_1,x_2; \Theta)=\Theta .
\end{eqnarray}
\begin{proposition}
The ket $\Theta \ket{s}$ and the bra $\bra {s}\Theta$ are represented by 
\begin{eqnarray}\label{132}
&&\Theta \ket{s}\;\rightarrow\;\ket{\Theta s (x_1)}=\int dx_2dx_3{\cal O}(x_1,x_2;\Theta)\ket{s(x_3)}\nonumber\\
&&\bra{s}\Theta \;\rightarrow\;\bra{s\Theta (x_3)}=\int dx_1dx_2\bra{s(x_1)}{\cal O}(x_2,x_3;\Theta).
\end{eqnarray}
\end{proposition}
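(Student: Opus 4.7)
The plan is to verify the two representations in Eq.(\ref{132}) by direct substitution of the definitions of ${\cal O}(x_1,x_2;\Theta)$ in Eq.(\ref{xcv}) and of $|s(x)\rangle$ in Eq.(\ref{cdb}), and then collapsing the inner integrals using the continuous resolution of the identity $\int_0^\infty dx\,{\cal T}(x) = {\bf 1}$ from Eq.(\ref{cd}).

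First I would fix the natural convention, in analogy with $|s(x)\rangle={\cal T}(x)|s\rangle$, that the ket $\Theta|s\rangle$ is represented by $|\Theta s(x_1)\rangle={\cal T}(x_1)\,\Theta\,|s\rangle$, and similarly $\langle s\Theta(x_3)|=\langle s|\Theta\,{\cal T}(x_3)$. With this in hand the first identity in Eq.(\ref{132}) reduces to showing
\begin{equation}
{\cal T}(x_1)\,\Theta\,|s\rangle=\int_0^\infty\!\!\int_0^\infty dx_2\,dx_3\;{\cal T}(x_1)\,\Theta\,{\cal T}(x_2)\,{\cal T}(x_3)\,|s\rangle.
\end{equation}
Pulling the factor ${\cal T}(x_1)\Theta$ (which does not depend on $x_2,x_3$) outside the integrals, the right-hand side becomes ${\cal T}(x_1)\Theta\bigl[\int_0^\infty dx_2\,{\cal T}(x_2)\bigr]\bigl[\int_0^\infty dx_3\,{\cal T}(x_3)\bigr]|s\rangle$, and each bracketed factor equals ${\bf 1}$ by Eq.(\ref{cd}). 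This yields the required identity.

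The second relation is proved by the transposed argument: start from $\langle s|\Theta\,{\cal T}(x_3)$, insert the definitions to obtain $\int dx_1\,dx_2\,\langle s|{\cal T}(x_1)\,{\cal T}(x_2)\,\Theta\,{\cal T}(x_3)$, and again collapse the two inner integrals with the resolution of the identity to recover $\langle s|\Theta\,{\cal T}(x_3)=\langle s\Theta(x_3)|$. Because ${\cal T}(x)$ is Hermitian, the bra factor $\langle s(x_1)|$ is exactly $\langle s|{\cal T}(x_1)$, so the presentation in Eq.(\ref{132}) matches.

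The only point that requires a little care, rather than being purely mechanical, is the justification for interchanging the finite-dimensional operator product with the improper integrals over $[0,\infty)$. This is the step I would flag as the main (if modest) obstacle, but it is harmless here: each matrix element of ${\cal T}(x)$ is a rational function of $x$ whose denominator is $(1+x)^{n+1}$, and the integrability on $[0,\infty)$ is already controlled by the Beta-function integral in Eq.(\ref{A3}), so Fubini applies componentwise and the pull-out of ${\cal T}(x_1)\Theta$ through the $dx_2\,dx_3$ integration is legitimate. With this remark the proposition follows from two applications of Eq.(\ref{cd}).
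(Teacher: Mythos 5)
Your proposal is correct and follows essentially the same route as the paper: substitute ${\cal O}(x_1,x_2;\Theta)={\cal T}(x_1)\Theta{\cal T}(x_2)$ and $\ket{s(x_3)}={\cal T}(x_3)\ket{s}$, then collapse the two inner integrals with $\int_0^\infty dx\,{\cal T}(x)={\bf 1}$ to obtain ${\cal T}(x_1)\Theta\ket{s}$, with the transposed argument for the bra. Your extra remarks (fixing the convention $\ket{\Theta s(x_1)}={\cal T}(x_1)\Theta\ket{s}$ and justifying the interchange of the operator product with the improper integrals) are sound refinements of the paper's one-line computation.
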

\begin{proof}
We get
\begin{eqnarray}
\int dx_2dx_3{\cal O}(x_1,x_2;\Theta)\ket{s(x_3)}=\int dx_2x_3{\cal T}(x_1)\Theta {\cal T}(x_2){\cal T}(x_3)\ket{s}={\cal T}(x_1)\Theta \ket{s}=\ket{\Theta s (x_1)}
\end{eqnarray}
The second relation is proved in a similar way.
\end{proof}
The unity operator is represented by
\begin{eqnarray}
{\cal O}(x_1,x_2; {\bf 1})= {\cal T}(x_1) {\cal T}(x_2)=\frac{1}{[(1+x_1)(1+x_2)]^{n+1}}\sum _{a=1}^d\sum _{b=1}^d x_1^{a-1}x_2^{b-1}\frac{R_aR_b}{B(a,n-a+1)B(b,n-b+1)} .
\end{eqnarray}
If $\Theta$ is a Hermitian operator ($\Theta^\dagger=\Theta$) then
 \begin{eqnarray}
[{\cal O}(x_1,x_2; \Theta )]^\dagger= {\cal O}(x_2,x_1; \Theta)
\end{eqnarray}

\subsection{The function $F(x_1,x_2)$}

\begin{definition}
If $\rho$ is the density matrix of a system, then
\begin{eqnarray}\label{A2}
F(x_1,x_2)&=&{\rm Tr}[{\cal O}(x_1,x_2; \rho)]={\rm Tr}[\rho {\cal T}(x_2){\cal T}(x_1)]\nonumber\\&=& \frac{1}{[(1+x_1)(1+x_2)]^{n+1}}\sum _{a=1}^d\sum _{b=1}^d x_1^{a-1}x_2^{b-1}\frac{{\rm Tr}(R_a\rho R_b)}{B(a,n-a+1)B(b,n-b+1)} .
\end{eqnarray}
$(x_1,x_2)$ takes values in the `probabilistic  quadrant' $[0, \infty)\times [0,\infty)$.
\end{definition}

$F(x_1,x_2)$ takes complex values, and since $\rho$ is Hermitian, $F(x_2,x_1)=[F(x_1,x_2)]^*$.
In the case of a pure state, i.e., $\rho=\ket{s}\bra {s}$, we get
\begin{eqnarray}
F(x_1,x_2)=\langle s (x_2)\ket {s(x_1)}.
\end{eqnarray}
The expansion in Eq.(\ref{cdb}) can be rewritten in terms of normalized vectors as
\begin{eqnarray}\label{cdb1}
\ket{s}=\int _0^{\infty}dx{\sqrt {F(x,x)}} \ket{{\cal S}(x)};\;\;\; \ket{{\cal S}(x)}=\frac{1}{\sqrt {F(x,x)}}\ket{s(x)}.
\end{eqnarray}
We note that
\begin{eqnarray}\label{XX1}
\int   _0^\infty d\beta {\cal T}(\alpha){\cal T}(\beta)=\int  _0^\infty d\beta{\cal T}(\beta){\cal T}(\alpha)={\cal T}(\alpha);\;\;\;\int _0^\infty \int _0^\infty dx_1dx_2{\cal T}(x_2){\cal T}(x_1)={\bf 1}.
\end{eqnarray}
 ${\cal T}(x_2)$, ${\cal T}(x_1)$ are Hermitian positive semi-definite operators, but they do not commute and the  ${\cal T}(x_2){\cal T}(x_1)$ is not a Hermitian operator, in general.

We define the `marginal function'
\begin{eqnarray}
{\mathfrak F}(\alpha)&=&\int _0^\infty dx_1F(x_1,\alpha)= \int _0^\infty dx_2F(\alpha, x_2)={\rm Tr}[\rho {\cal T}(\alpha)]\nonumber\\&=&\frac{1}{(1+\alpha)^{n+1}}\sum _{b=1}^d \alpha^{b-1}\frac{{\rm Tr}(\rho R_b)}{B(b,n-b+1)}\ge 0.
\end{eqnarray}
The two marginal functions are exactly the same function.
The marginal  function ${\mathfrak F}(\alpha)$ is the average value of measurements with the Hermitian positive semi-definite operator ${\cal T}(\alpha)$  on an ensemble described with the density matrix $\rho$. 
But we can {\bf not} interpret the  ${\mathfrak F}(\alpha)$ as a probability distribution because the  ${\cal T}(\alpha)$,  ${\cal T}(\alpha ^\prime )$ do not commute.

\begin{proposition}
The moments of the function $F(x_1,x_2)$ are
\begin{eqnarray}\label{A1}
&&\langle x_1^\mu x_2^\nu\rangle=\int _0^\infty dx_1 \int _0^\infty dx_2  x_1^\mu x_2^\nu F(x_1,x_2)=\sum _{a,b}{\rm Tr}(R_a\rho R_b)\frac{(a)_\mu }{(n-a+1-\mu)_\mu}\frac{(b)_\nu }{(n-b+1-\nu)_\nu}\nonumber\\
&&(e)_\mu=\frac{\Gamma (e+\mu)}{\Gamma (e)}.
\end{eqnarray}
In particular for $\mu=\nu=0$ we get
\begin{eqnarray}\label{A4}
\int _0^\infty dx_1\int _0^\infty dx_2 F(x_1,x_2)=\int _0^\infty d\alpha {\mathfrak F}(\alpha)=1.
\end{eqnarray}
For $\mu=1$, $\nu=0$ we get
\begin{eqnarray}\label{A11}
\langle x_1\rangle =\int _0^\infty dx_1\int _0^\infty dx_2 x_1  F(x_1,x_2)=\sum _{a=1}^d {\rm Tr}(\rho R_a)\frac{a }{n-a},
\end{eqnarray}
and similarly for $\mu=0$, $\nu=1$.
\end{proposition}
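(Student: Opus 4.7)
The proof reduces to evaluating a product of two Beta-type integrals and then identifying the resulting Gamma-function combination with a ratio of Pochhammer symbols. The plan is to substitute the explicit double sum for $F(x_1,x_2)$ from Eq.(\ref{A2}) into the moment integral, interchange sum and integral (the sums are finite, so this is justified without any convergence argument), and observe that the integrand factorizes as a product of a function of $x_1$ and a function of $x_2$. Each factor is of the form $x^{c-1}/(1+x)^{n+1}$ with $c = a+\mu$ or $c = b+\nu$, which is exactly the integrand of Eq.(\ref{A3}).

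The key computational step is therefore the identity
\begin{eqnarray}
\int_0^\infty \frac{x^{a-1+\mu}}{(1+x)^{n+1}}\,dx = B(a+\mu,n-a+1-\mu) = \frac{\Gamma(a+\mu)\Gamma(n-a+1-\mu)}{\Gamma(n+1)},
\end{eqnarray}
valid whenever $a+\mu \ge 1$ and $n-a+1-\mu \ge 1$, which is the natural range here. Dividing by the $B(a,n-a+1)$ that appears in the denominator of $F$ and rewriting $\Gamma(a+\mu)/\Gamma(a) = (a)_\mu$ and $\Gamma(n-a+1)/\Gamma(n-a+1-\mu) = (n-a+1-\mu)_\mu$ produces the ratio $(a)_\mu/(n-a+1-\mu)_\mu$ that appears in Eq.(\ref{A1}); applying the same manipulation to the $x_2$ integral gives the $(b)_\nu/(n-b+1-\nu)_\nu$ factor. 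Combining these yields the stated formula.

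The two special cases follow by inspection. For $\mu=\nu=0$ every Pochhammer symbol is $1$ (empty product), so
\begin{eqnarray}
\int_0^\infty dx_1\int_0^\infty dx_2\, F(x_1,x_2) = \sum_{a,b}{\rm Tr}(R_a\rho R_b) = {\rm Tr}\!\left[\left(\sum_b R_b\right)\!\left(\sum_a R_a\right)\rho\right] = {\rm Tr}\rho = 1,
\end{eqnarray}
where cyclicity of the trace and the resolution of identity $\sum_a R_a = {\bf 1}$ from Eq.(\ref{reso23}) have been used; the equality with $\int_0^\infty d\alpha\,{\mathfrak F}(\alpha)$ is immediate from the definition of the marginal. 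For $\mu=1,\nu=0$ we have $(a)_1=a$, $(n-a)_1 = n-a$, so the $b$-sum collapses by $\sum_b R_b = {\bf 1}$ to produce $\langle x_1\rangle = \sum_a \frac{a}{n-a}{\rm Tr}(\rho R_a)$ in agreement with Eq.(\ref{A11}).

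There is no real obstacle: the only mild subtlety is the noncommutativity of $R_a$ and $R_b$, which prevents one from simply combining them, but this does not affect the argument because the Beta integrals decouple the $x_1$ and $x_2$ dependence before the operators are touched. One should merely note, when writing out the special cases, to apply trace cyclicity (rather than operator commutativity) when reducing $\sum_{a,b}{\rm Tr}(R_a\rho R_b)$ to ${\rm Tr}(\rho)$.
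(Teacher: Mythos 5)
Your proof is correct and follows essentially the same route as the paper: substitute the expansion of $F(x_1,x_2)$ from Eq.(\ref{A2}), evaluate the factorized integrals with Eq.(\ref{A3}), rewrite the Beta-function ratio as $(a)_\mu/(n-a+1-\mu)_\mu$, and use $\sum_a R_a={\bf 1}$ (Eq.(\ref{reso23})) for the special cases. Your write-up merely spells out the Gamma-function bookkeeping and the trace manipulation that the paper leaves implicit.
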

\begin{proof}
Eq.(\ref{A1}) is proved using Eq.(\ref{A2}) and the integral in Eq.(\ref{A3}).
We also use the relation
\begin{eqnarray}
\frac{B(a+\nu,n-a+1-\nu)}{B(a,n-a+1)}=\frac{(a)_\nu }{(n-a+1-\nu)_\nu}
\end{eqnarray}
For Eq.(\ref{A4}), we substitute $\mu=\nu=0$ in 
Eq.(\ref{A1}),  and then use Eq.(\ref{reso23}).
For Eq.(\ref{A11}), we simply substitute $\mu=1$, $\nu=0$  in 
Eq.(\ref{A1}).
\end{proof}

\subsection{Analogy between the function $F(x_1,x_2)$  and the Wigner function of the harmonic oscillator}\label{WWW}

There is similarity between the function $F(x_1,x_2)$ which describes a system with finite-dimensional Hilbert space, and the Wigner function of the harmonic oscillator which involves continuous variables.
We note here that the Wigner function for  a system with finite-dimensional Hilbert space is a discrete structure (it involves discrete variables that take a finite number of values).
Our approach describes systems with finite-dimensional Hilbert space with continuous variables, that are related to the probabilities associated with random projectors.

Using the notation in the review of ref. \cite{V2}, we define in the harmonic oscillator context
the displaced parity operators
\begin{eqnarray}
&&{\cal U}(A)=D(A){\cal U}(0)[D(A)]^{\dagger};\;\;\;D(A)=\exp (Aa^{\dagger}-A^*a);\;\;\;A\in {\mathbb C}\nonumber\\
&&{\cal U}(0)=\sum _{N=0}^\infty(-1)^N\ket{N}\bra{N};\;\;\;{\cal U}(A)[{\cal U}(A)]^\dagger ={\bf 1},
\end{eqnarray}
where $a^{\dagger}, a$ are creation and annihilation operators, $\ket{N}$ are number eigenstates, and ${\cal U}(0)$ is the parity operator around the origin.
They obey the `marginal relations'
\begin{eqnarray}\label{XX2}
&&\int _{-\infty}^{\infty} dA_R {\cal U}(A)=\frac{\pi}{\sqrt 2}\ket{A_I\sqrt 2}_p\;_p\bra {A_I\sqrt 2};\;\;\;\int _{-\infty}^{\infty} dA_I {\cal U}(A)=\frac{\pi}{\sqrt 2}\ket{A_R\sqrt 2}_x\;_x\bra {A_R\sqrt 2}\nonumber\\
&&\int _{-\infty}^{\infty} \int _{-\infty}^{\infty}  \frac{d^2A}{\pi}{\cal U}(A)={\bf 1}.
\end{eqnarray}
Here $\ket{}_x$, $\ket {}_p$ are position and momentum states, and $A_R, A_I$ are the real and imaginary parts of $A$.
The $\ket{A_I\sqrt 2}_p\;_p\bra {A_I\sqrt 2}$, $\ket{A_R\sqrt 2}_x\;_x\bra {A_R\sqrt 2}$ are projectors, and the ${\cal U}(A)$ is a unitary operator.
The Wigner function is defined in terms of the displaced parity operator ${\cal U}(A)$ as
\begin{eqnarray}
W(A)={\rm Tr}[\rho {\cal U}(A)].
\end{eqnarray}

Comparison of Eqs(\ref{XX1}), (\ref{XX2}), shows that there is analogy between ${\cal T}(x_2){\cal T}(x_1)$ and the displaced parity operators  ${\cal U}(A)$, and consequently between the 
function $F(x_1, x_2)$ and the Wigner function. Of course there are differences, like the fact that ${\cal U}(A)$ are unitary operators, while the ${\cal T}(x_2){\cal T}(x_1)$ are not unitary operators.

The marginals of the Wigner function
\begin{eqnarray}\label{wigner}
&&\int _{-\infty}^{\infty} dA_R W(A)=\frac{\pi}{\sqrt 2}\;_p\bra {A_I\sqrt 2}\rho \ket{A_I\sqrt 2}_p;\;\;\;\int _{-\infty}^{\infty} dA_I W(A)=\frac{\pi}{\sqrt 2}\;_x\bra {A_R\sqrt 2}\rho \ket{A_R\sqrt 2}_x\nonumber\\
&&\int _{-\infty}^{\infty} \int _{-\infty}^{\infty}\frac{d^2A}{\pi}W(A)={1},
\end{eqnarray}
are analogous to the marginals ${\mathfrak F}(x_1)$, ${\mathfrak F}(x_2)$ of the function $F(x_1, x_2)$.

We note that
\begin{itemize}
\item
The harmonic oscillator Wigner function
shows the location of a quantum state in quantum phase space.
This is done using the  partition of the unity by the displaced parity operators ${\cal U}(A)$ in Eq.(\ref{XX2}).
Large values of the Wigner function $W(A)$ show which ${\cal U}(A)$ overlap most with the density matrix.
The integral of the Wigner function over the phase-space plane is $1$. 
The two marginals are probability distributions.
\item
The operators  ${\cal T}(x_2){\cal T}(x_1)$ provide another partition of the unity.
Large values of the $F(x_1,x_2)$ show which ${\cal T}(x_2){\cal T}(x_1)$ overlap most with the density matrix.
The integral of the $F(x_1,x_2)$ over the quadrant $[0, \infty)\times [0,\infty)$ is $1$. 
The two marginal functions are equal to each other, and they are non-negative functions with integral equal to $1$, but they cannot be interpreted as probability distributions.
\end{itemize}

\subsection{Example}\label{ex1}

We consider a quantum system with positions and momenta in ${\mathbb Z}(3)$ (the integers modulo $3$), and $3$-dimensional Hilbert space $H(3)$.
Let $|{X};\nu\rangle$ and $|{P};\nu \rangle$ be orthonormal bases which we call position and momentum states, correspondingly.
They are related through a Fourier transform:
\begin{eqnarray}\label{PPPT}
|{P};\nu\rangle=\frac{1}{\sqrt 3}\sum _{\mu, \nu}\omega ^{\mu \nu}\ket{X;\mu};\;\;\;
\omega=\exp \left (i\frac {2\pi }{3}\right );\;\;\;\mu, \nu \in{\mathbb Z}(3).
\end{eqnarray}
The $X, P$ in the notation are not variables but they simply indicate position and momentum states.

We also consider the following pre-basis of $n=4$ vectors:
\begin{eqnarray}
\ket{1}=\frac{1}{\sqrt 2}
\begin{pmatrix}
1\\
1\\
0\\
\end{pmatrix};\;\;\;
\ket{2}=\frac{1}{\sqrt 2}
\begin{pmatrix}
1\\
0\\
1\\
\end{pmatrix};\;\;\;
\ket{3}=\frac{1}{\sqrt 2}
\begin{pmatrix}
0\\
1\\
1\\
\end{pmatrix};\;\;\;
\ket{4}=\frac{1}{\sqrt 6}
\begin{pmatrix}
1\\
1\\
2\\
\end{pmatrix},
\end{eqnarray}
They are written in terms of components in the basis of position states.
In this case
\begin{eqnarray}\label{PR1}
\Pi (\{1\})=\frac{1}{2}
\begin{pmatrix}
1&1&0\\
1&1&0\\
0&0&0\\
\end{pmatrix};\;\;
\Pi (\{2\})=\frac{1}{2}
\begin{pmatrix}
1&0&1\\
0&0&0\\
1&0&1\\
\end{pmatrix};\;\;
\Pi (\{3\})=\frac{1}{2}
\begin{pmatrix}
0&0&0\\
0&1&1\\
0&1&1\\
\end{pmatrix};\;\;
\Pi (\{4\})=\frac{1}{6}
\begin{pmatrix}
1&1&2\\
1&1&2\\
2&2&4\\
\end{pmatrix}.
\end{eqnarray}
Also using Eq.(\ref{670A}), we find
\begin{eqnarray}\label{PR2}
&&\Pi (\{1,2\})=\frac{1}{3}
\begin{pmatrix}
2&1&1\\
1&2&-1\\
1&-1&2\\
\end{pmatrix};\;\;
\Pi (\{1,3\})=\frac{1}{3}
\begin{pmatrix}
2&1&-1\\
1&2&1\\
-1&1&2\\
\end{pmatrix};\;\;
\Pi (\{1,4\})=\frac{1}{2}
\begin{pmatrix}
1&1&0\\
1&1&0\\
0&0&2\\
\end{pmatrix}\nonumber\\
&&\Pi (\{2,3\})=\frac{1}{3}
\begin{pmatrix}
2&-1&1\\
-1&2&1\\
1&1&2\\
\end{pmatrix};\;\;
\Pi (\{2,4\})=\frac{1}{3}
\begin{pmatrix}
2&-1&1\\
-1&2&1\\
1&1&2
\end{pmatrix};\;\;
\Pi (\{3,4\})=\frac{1}{3}
\begin{pmatrix}
2&-1&1\\
-1&2&1\\
1&1&2
\end{pmatrix}.
\end{eqnarray}
Furthermore
\begin{eqnarray}\label{PR3}
\Pi(\emptyset)=0;\;\;\;
\Pi (\{1,2,3\})=\Pi (\{1,2,4\})=\Pi (\{1,3,4\})=\Pi (\{2,3,4\})=\Pi (\{1,2,3,4\})={\bf 1}.
\end{eqnarray}
Using Eq.(\ref{ppp}) we get 
$\Lambda_{1}(i)=\Pi(\{i\})$ and
\begin{eqnarray}\label{mmm}
&&\Lambda _{2}(1)=\frac{1}{6}
\begin{pmatrix}
7&6&-5\\
6&7&-5\\
-5&-5&4\\
\end{pmatrix};\;\;
\Lambda_{3}(1)=
\begin{pmatrix}
1&1&-1\\
1&1&-1\\
-1&-1&1\\
\end{pmatrix}\nonumber\\&&
\Lambda_{2}(2)=\frac{1}{6}
\begin{pmatrix}
8&-6&4\\
-6&5&-3\\
4&-3&5\\
\end{pmatrix};\;\;
\Lambda_{3}(2)=\frac{1}{6}
\begin{pmatrix}
7&-3&0\\
-3&7&-4\\
0&-4&4\\
\end{pmatrix}\nonumber\\
&&\Lambda_{2}(3)=\frac{1}{6}
\begin{pmatrix}
5&-6&-3\\
-6&8&4\\
-3&4&5\\
\end{pmatrix};\;\;
\Lambda_{3}(3)=\frac{1}{6}
\begin{pmatrix}
7&-3&-4\\
-3&7&0\\
-4&0&4\\
\end{pmatrix}\nonumber\\&&
\Lambda_{2}(4)=\frac{1}{6}
\begin{pmatrix}
5&-4&1\\
-4&5&1\\
1&1&8\\
\end{pmatrix};\;\;
\Lambda_{3}(4)=\frac{1}{3}
\begin{pmatrix}
3&-1&-1\\
-1&3&-1\\
-1&-1&3\\
\end{pmatrix}.
\end{eqnarray}
They give  the density matrices
\begin{eqnarray}
\sigma _{\Omega}(i)=\frac{4}{3}\sum _{a=1}^3 B(a,4-a+1)\Lambda _{a}(i);\;\;\;\frac{3}{4}\sum _{i=1}^4\sigma _{\Omega}(i)={\bf 1},
\end{eqnarray}
where
\begin{eqnarray}\label{123}
&&\sigma _\Omega (1)=
\begin{pmatrix}
0.407&0.388&-0.203\\
0.388&0.407&-0.203\\
-0.203&-0.203&0.185\\
\end{pmatrix};\;\;
\sigma _\Omega (2)=
\begin{pmatrix}
0.444&-0.166&0.240\\
-0.166&0.222&-0.129\\
0.240&-0.129&0.333\\
\end{pmatrix}\nonumber\\&&
\sigma _\Omega (3)=
\begin{pmatrix}
0.222&-0.166&-0.129\\
-0.166&0.444&0.240\\
-0.129&0.240&0.333\\
\end{pmatrix};\;\;
\sigma _\Omega (4)=
\begin{pmatrix}
0.259&-0.055&0.092\\
-0.055&0.259&0.092\\
0.092&0.092&0.481\\
\end{pmatrix}.
\end{eqnarray}

The density matrices ${R _a}$ are calculated from the matrices $\Lambda _{a}(i)$ given in Eq.(\ref{mmm}), in the position basis:
\begin{eqnarray}
R_1=
\begin{pmatrix}
0.291&0.166&0.208\\
0.166&0.291&0.208\\
0.208&0.208&0.416
\end{pmatrix};\;
R_2=
\begin{pmatrix}
0.347&-0.138&-0.041\\
-0.138&0.347&-0.041\\
-0.041&-0.041&0.305
\end{pmatrix};\;
R_3=
\begin{pmatrix}
0.361&-0.027&-0.166\\
-0.027&0.361&-0.166\\
-0.166&-0.166&0.277
\end{pmatrix}
\end{eqnarray}
They resolve the identity:
\begin{eqnarray}
R_1+R_2+R_3={\bf 1}.
\end{eqnarray}
Then the operators ${\cal T}(x)$ are:
 \begin{eqnarray}
{\cal T}(x)&=&\frac{4{R _1}+12x{R _2}+12x^2{R _3}}{(1+x)^5}
\end{eqnarray}

Using Eq.(\ref{cdb}), we expand the position states as
\begin{eqnarray}\label{1234}
&&\ket{X;\nu}=\int _0^\infty \frac{dx}{(1+x)^5}
\begin{pmatrix}
\alpha _{\nu 0}+\beta_{\nu 0}x+\gamma _{\nu 0}x^2\\
\alpha _{\nu 1}+\beta_{\nu 1}x+\gamma _{\nu 1}x^2\\
\alpha _{\nu 2}+\beta_{\nu 2}x+\gamma _{\nu 2}x^2\\
\end{pmatrix}
\end{eqnarray}
where the coefficients are given in table \ref{t2}.
The momentum 
 states are expanded as
\begin{eqnarray}\label{1239}
&&\ket{P;\nu}=\int _0^\infty \frac{dx}{(1+x)^5}
\begin{pmatrix}
\widetilde \alpha _{\nu 0}+\widetilde\beta_{\nu 0}x+\widetilde\gamma _{\nu 0}x^2\\
\widetilde\alpha _{\nu 1}+\widetilde\beta_{\nu 1}x+\widetilde\gamma _{\nu 1}x^2\\
\widetilde\alpha _{\nu 2}+\widetilde\beta_{\nu 2}x+\widetilde\gamma _{\nu 2}x^2\\
\end{pmatrix}
\end{eqnarray}
where
\begin{eqnarray}
\widetilde \alpha _{\nu j}=\frac{1}{\sqrt 3}\sum _{\mu}\omega ^{\mu \nu}\alpha _{\mu j};\;\;\;\widetilde \beta _{\nu j}=\frac{1}{\sqrt 3}\sum _{\mu}\omega ^{\mu \nu}\beta _{\mu j};\;\;\;\widetilde \gamma _{\nu j}=\frac{1}{\sqrt 3}\sum _{\mu}\omega ^{\mu \nu}\gamma _{\mu j}.
\end{eqnarray}
The density matrix $\rho=\frac{1}{3}{\bf 1}$ is represented with the
\begin{eqnarray}
{\cal O}(x_1,x_2)=\frac{1}{3[(1+x_1)(1+x_2)]^{5}}\sum _{a=1}^3\sum _{b=1}^3 x_1^{a-1}x_2^{b-1}\frac{R_aR_b}{B(a,5-a)B(b,5-b)} .
\end{eqnarray}

We next calculate the function $F(x_1,x_2)$ for the position states. We find
\begin{eqnarray}\label{ddd}
\ket{X;\nu}\;\rightarrow\;F_\nu (x_1,x_2)=\frac{1}{(1+x_1)^5(1+x_2)^5}\sum _{i=0}^2\sum _{j=0}^2 A_{ij-\nu}x_1^ix_2^j,
\end{eqnarray}
where the coefficients are given in table \ref{t3}.

\begin{table}
\caption{The coefficients in Eqs.(\ref{ddd}).}
\def\arraystretch{2}
\begin{tabular}{|c||c|c|c|c|c|c|c|c|c|}\hline
$\nu$&$A_{00-\nu }$&$A_{10-\nu }$&$A_{01-\nu }$&$A_{20-\nu }$&$A _{02-\nu }$&$A _{11-\nu }$&$A _{21-\nu }$&$A _{12-\nu }$&$A _{22-\nu }$\\\hline
$0$&$2.497$&$3.331$&$3.331$&$3.164$&$3.164$&$20.381$&$19.606$&$19.606$&$22.885$\\\hline
$1$&$2.497$&$3.331$&$3.331$&$3.164$&$3.164$&$20.381$&$19.606$&$19.606$&$22.885$\\\hline
$2$&$4.163$&$5.274$&$5.274$&$2.220$&$2.220$&$13.939$&$14.218$&$14.218$&$19.108$\\\hline
\end{tabular} \label{t3}
\end{table}

\section{Discussion}

Resolutions of the identity are an important part of the quantum formalism (and also of the Harmonic Analysis formalism).
They underpin the theory of coherent states, POVM, and frames and wavelets.
The present work is a contribution to this general area.  We construct an infinite number of continuous resolutions of the identity, in a finite-dimensional Hilbert space.
Consequently, systems with finite-dimensional Hilbert space which are naturally described with discrete variables and `Discrete Mathematics', are described here with continuous variables and `Continuous Mathematics'.

 The first part of the paper  (sections 2, 4, 5) is in the context of set theory.
Random sets are introduced in definition \ref{def100}. Their average cardinalities are functions of $n$ probabilities (proposition \ref{PRO2}), and they  are used to get a continuous partition of the cardinality of the union of $n$ overlapping sets (proposition \ref{PRO10}).
The formalism uses M\"obius transforms and it is inspired by Shapley's methodology in cooperative game theory.
An example has been presented in section \ref{ex}.
This part of the paper exposes the methodology in a simpler setting than a Hilbert space.
It is interesting in its own right, but here we are interested in its generalisation 
in the second part of the paper (sections 3, 6, 7, 8) into the context of  finite-dimensional Hilbert spaces.

Random projectors  into the $2^n$ subspaces spanned by states from a total set of $n$ states are introduced in definition \ref{def200}.
Their average is an operator which depends on $n$ probabilities (proposition \ref{PRO11}) and is used to construct an infinite number of continuous resolutions of the identity (proposition \ref{PRO100}),
that involve Hermitian positive semi-definite operators.
The simplest one is the diagonal continuous resolution of the identity(proposition \ref{PRO100}), and it is used to expand an arbitrary vector 
in terms of a continuum of components (Eq.(\ref{cdb})). It is also used to define the $F(x_1,x_2)$ function 
which is analogous to the Wigner function for the harmonic oscillator. An example has been presented in section \ref{ex1}.

The work is a contribution to the area of resolutions of the identity and their applications.

\end{document}